\newcommand{\black}{\color{black}}
\newcommand{\cp}{{\mc R}}
\newtheorem{theorem}{Theorem}
\newtheorem{definition}{Definition}
\newtheorem{proposition}{Proposition}
\newtheorem{lemma}{Lemma}
\newtheorem{corollary}{Corollary}
\newtheorem{example}{Example}
\newtheorem{remark}{Remark}
\newcommand{\ba}{\begin{array}}
\newcommand{\ea}{\end{array}}
\newcommand{\be}{\begin{equation}}
\newcommand{\ee}{\end{equation}}
\newcommand{\abs}[1]{\lvert#1\rvert}
\newcommand{\ds}{\displaystyle}
\newcommand{\mc}{\mathcal}
\newcommand{\tcr}{\textcolor{red}}
\newcommand{\ov}{\overline}
\newcommand{\B}{\mathcal{B}}
\def\1{\boldsymbol{1}}
\newcommand{\R}{\mathbb{R}}
\newcommand{\se}{\text{ if }}
\newcommand{\tcb}{\textcolor{black}}
\tikzstyle{v_c}=[circle, draw,inner sep=2pt, minimum width=12pt]
\tikzstyle{v_a}=[circle, draw,inner sep=2pt, minimum width=12pt]
\tikzstyle{edge} = [draw,thick,-,font=\small ]
\tikzstyle{label} = [draw,fill=black,font=\normalsize]
\DeclareMathOperator*{\argmax}{argmax}
\def\R{\mathbb{R}}
\def\BibTeX{{\rm B\kern-.05em{\sc i\kern-.025em b}\kern-.08em
    T\kern-.1667em\lower.7ex\hbox{E}\kern-.125emX}}
\title{On {\black Signed Network Games with Binary Actions}}
\author{Martina Vanelli,~\IEEEmembership{Member,~IEEE}, Laura~Arditti, 
Giacomo~Como,~\IEEEmembership{Member,~IEEE,} and
        Fabio~Fagnani
\thanks{Some of the results in this paper appeared in preliminary form in \cite{vanelli2020games,arditti2021equilibria}.}
\thanks{
		Martina Vanelli is with the Institute for Information and Communication Technologies, Electronics and Applied Mathematics (ICTEAM), Université catholique de Louvain, 1348 Ottignies-Louvain-la-Neuve, Belgium
	(e-mail: martina.vanelli@uclouvain.be).}
\thanks{	Laura Arditti was with the Department of Mathematical Sciences
	“G.L. Lagrange,” Politecnico di Torino, 10129 Torino, Italy. She is now
	working in the financial sector in Zurich 8048, Switzerland (e-mail:
	laura.arditti@gmail.com).}
\thanks{	Giacomo Como is with the Department of Mathematical Sciences
	“G.L. Lagrange,” Politecnico di Torino, 10129 Torino, Italy, and also
	with the Department of Automatic Control, Lund University, 22100 Lund,
	Sweden (e-mail: giacomo.como@polito.it).}
\thanks{	Fabio Fagnani is with the Department of Mathematical Sciences
	“G.L. Lagrange,” Politecnico di Torino, 10129 Torino, Italy (e-mail:
	fabio.fagnani@polito.it).}

\thanks{
}
}
\begin{document}

\maketitle
\thispagestyle{empty}

%

\begin{abstract}
We study binary-action pairwise-separable \tcb{graphical} games that encompass both  \tcb{coordination and anti-coordination network games}. 
Our model is grounded in an underlying directed signed graph, where each link is associated with a \tcb{signed} weight that describes both  nature and the strength of the \tcb{strategic pairwise} interaction. \tcb{Specifically, positive link weight corresponds to a strategic complement type interaction, whereas negative link weight corresponds to strategic substitute type interaction.} The utility for each \tcb{player} is then an aggregation of pairwise terms determined by the weights of the signed graph in addition to an individual bias term. 

We consider a scenario that assumes the presence of a prominent cohesive subset of players, who are either connected exclusively by positive weights, or form a structurally balanced subset that can be bipartitioned into two adversarial subcommunities with positive intra-community and negative inter-community edges. 
Under suitable properties of the game restricted to the remaining players, our results guarantee the existence of Nash equilibria characterized by either consensus or polarization within the first group, as well as their stability under  best response transitions. Our results can be interpreted as robustness results, building on the super-modular properties of \tcb{network} coordination games and on a novel use of the concept of graph cohesiveness. 
\end{abstract}

\textbf{Index terms:} Network games, coordination games, anti-coordination games, \tcb{stretegic complements}, \tcb{strategic substitutes}, signed graphs, \tcb{structural balance}, \tcb{cohesiveness}, best response dynamics, network robustness.

\section{Introduction}
A key feature of many socio-technical systems is the heterogeneity among the behaviors of the \tcb{player}s in the network and among their interactions and mutual influences. Such heterogeneities pose significant challenges in various fields including traffic and routing games and  epidemic models.

In this paper, we focus on heterogeneous interactions within networks of \tcb{player}s engaged in strategic games with binary actions. Our model encompasses two prominent families of network games \cite{Blume:1993,Jackson:2008,Galeotti.ea:2010,Jackson.Zenou:2015}: network coordination games \cite{Ellison:1993,Young:1993,Morris:2000,Young:2006,Jackson.Storms:2025} and network anti-coordination games \cite{Bramoulle.ea:2004,Galam:2004,Bramoulle:2007,Lopez-Pintado:2009,Grabisch.Li:2019},  both of which have a variety of applications in economics, social sciences, and biology. In their simplest version, the utility of a player in a network coordination (anti-coordination) game is an affine increasing (decreasing) function of the number of her neighbors in the network playing the same action. Despite their apparent similarity, both fundamental properties and applications of network coordination and network anti-coordination games are quite different. 	

Network coordination games model the so called strategic complements effects, that is when the choice of a certain action by one player makes it more appealing for other players to play the same action. They are used to model social network features like the adoption of beliefs or behavioral attitudes, or economic ones such as the spread of a new technology. Mathematically, they belong to the broader class of super-modular games \cite{Topkins:1979,Milgrom.ea:1990,Vives:1990,Topkins:1998}. As a consequence, \tcb{pure strategy }Nash equilibria always exist and one special instance of them are the consensus \tcb{action profile}s, namely, those where all individuals are playing the same action. Moreover, asynchronous best response dynamics in network coordination games globally reach the set of \tcb{pure strategy} Nash equilibria \tcb{\cite{Ramazi.Cao:2020, sakhaei2023equilibration}}.   

In contrast, network anti-coordination games are representative of another class of games exhibiting the so called strategic substitutes effect. In this case, the choice of a certain action by one player makes it more appealing for the other players to play the opposite action. They provide a natural model in situations where players are competing for resources that can become congested or in models where players can provide a public good, buy snob goods, or, in general, when there are gains from differentiation. In contrast to network coordination games, existence of \tcb{pure strategy} Nash equilibria for anti-coordination games is guaranteed  in special cases  strongly dependent on the network structure \tcb{\cite{Bramoulle:2007}.} Moreover, even when \tcb{pure strategy Nash} equilibria exist, asynchronous best response dynamics in network anti-coordination games  may get trapped in limit cycles. 

Network games comprising both coordinating and anti-coordinating interactions have been recently proposed to model the presence of anti-conformist behaviors in a social community, accounting for some form of heterogeneity \cite{Ramazi.Riehl.Cao:2016,Grabisch.ea:2019,ramazi2023characterizing, le2023heterogeneous}. More generally, games exhibiting both strategic complements and substitutes have been recently proposed in the economic literature  \cite{Monaco:2016} to model heterogeneous interactions, e.g., markets with coexistence of both Cournot and Bertrand type firms. Such mixed games may fail to admit Nash equilibria. A fundamental example is the matching pennies game, which is a two-player game with one coordinating and one anti-coordinating player. 
In contrast with the matching pennies game, one can imagine that, in a scenario where most of the players are coordinating and form a ``well connected'' subset, the presence of few anti-coordinating players should not prevent the coordinating players to reach a consensus and possibly the whole system to reach a Nash equilibrium. This is one main motivation to our work. 

Another application is when a \tcb{sub-graph of the considered network} is structurally balanced. A structurally balanced \cite{Harary:1953,Cartwright:1956} signed graph is one where vertices can be split into two subsets so that intra links on every set have positive weight, while links connecting the two groups have negative weight. Such property envisages a polarization in the system's equilibrium. 
 In this case we want to determine whether such polarization is still reached even if the entire graph is not structurally balanced. 
 
{\color{black} Network models with signed weights have appeared in many other different fields: to model the presence of antagonistic interactions in social networks \cite{Harary:1953,Cartwright:1956, Macy:2003, Leskovec:2010}, inhibitory signals in genetics \cite{Kauffman:1969, Thomas:1973} and neural networks \cite{Hopfield:2010}, antiferromagnetic bonds in spin glasses \cite{Mezard}.
The popular Linear Threshold Model (LTM) originally introduced in \cite{Granovetter:1978} for non negative graphs has been recently proposed in the context of signed graphs in \cite{He:2013, Golesa:2025}. This dynamical system is strictly related to the best response dynamics for network \tcb{coordination} games. Literature on LTM on signed graphs has focused on showing the differences with respect to the nonnegative case: lack of cascades, dependence on the activation pattern, polarization. Somewhat analogous are the studies of the linear \tcb{and nonlinear} averaging dynamics on signed graphs \cite{Altafini:2012, Altafini:2013, fontan2017multiequilibria, fontan2021role}. Therein, the important concept of structurally balanced graph 
is used to determine the structure of the steady state dynamics. 
}

In this paper, we consider a finite set of \tcb{player}s whose network of interactions is modelled as a directed signed graph. Given two \tcb{player}s, it is possible that a link exists in just one direction and, even when both links  are present, they may have a different weight and possibly weights with opposite signs. This means that \tcb{player} $i$ may tend to coordinate with \tcb{a player} $j$, while \tcb{player} $j$ tends to anti-coordinate with $i$. The utility of \tcb{player} $i$ is an aggregation of a family of pairwise terms one for each of the out-neighbors of $i$ plus an individual bias term. Each pairwise term can be of coordination or anti-coordination type and is modulated by the corresponding graph weight.  We call such games {\black signed network games (SNGs) with binary actions}.
The focus of this paper is on the existence of pure strategy Nash equilibria\footnote{\tcb{We only focus on pure strategy Nash equilibria and do not consider broader equilibrium notions such as mixed strategy or correlated equilibria. This is 
		standard in the literature of coordination and anti-coordination network games, as such equilibria correspond to deterministic action configurations on the network (e.g., technology adoption, opinion formation, behavioral choices).}} in {\black SNGs} and the analysis of their stability with respect to best response 
transitions. 

Our work builds on two fundamental concepts. The first one is a novel use of the notion of cohesiveness originally proposed in the pioneering work \cite{Morris:2000} to describe the \tcb{pure strategy} Nash equilibria of network coordination games. The second is the super-modularity property \cite{Topkins:1979,Milgrom.ea:1990,Vives:1990,Topkins:1998} of network coordination games, particularly the robustness results recently appeared in \cite{arditti2024robust}. 
The general setup of our results is that of a {\black SNG} where the set of \tcb{player}s $\mc V$ is split into two subsets $\mc V=\mc R\cup\mc S$. \tcb{Player}s in $\mc R$ are assumed to be intra themselves coordinating and to form a sufficiently cohesive subset. Alternatively, they are assumed to form a cohesive structurally balanced subset. In this latter case, a transformation of the \tcb{action profile} set allows one to obtain a coordinating subset. If \tcb{player}s in $\mc S$ possess an equilibrium conditioned to 
\tcb{the value} of the actions taken by the \tcb{player}s in $\mc R$, 
then a Nash equilibrium exists that is a consensus or a polarization (respectively) on the \tcb{player}s in $\mc R$. This is the content of our first result,  Theorem \ref{theo:main-existence}\tcb{, with Proposition \ref{prop:mixed1} addressing the special case in which the players in $\mc R$ are coordinating.} \tcb{Corollaries \ref{th:mixed_existence1} and \ref{th:mixed_existence2} identify broad classes of subnetworks for which the condition on $\mc S$ is automatically satisfied,  that is, when the graph restricted to the players in $\mc S$ is structurally balanced or undirected.}

Conditions for the convergence of the best response dynamics are characterized in terms of a novel notion of indecomposability, related to the uniform non-cohesiveness property used in \cite{Morris:2000}, and make a fundamental use of robustness properties of super-modular games introduced in \cite{arditti2024robust}. 
Our main result, Theorem \ref{theo:main-stability}, formalizes these ideas. 	\tcb{
Intuitively, indecomposability rules out the possibility of partitioning the set $\mathcal R$ into distinct polarized subsets that remain in equilibrium under suitable external influences from $\mathcal S$. Although indecomposability is more restrictive than cohesiveness, it yields substantially stronger conclusions: global reachability ensures that from every initial action profile there exists a best-response path leading to a pure strategy Nash equilibrium, while global stability additionally guarantees the existence of sets of pure strategy Nash equilibria that are invariant under best response. These results are particularly significant because SNGs are, in general, neither potential nor supermodular, and therefore do not typically admit global stability or monotonicity arguments.  Despite the stronger requirement imposed on the core subnetwork, the assumptions on the subnetwork restricted to the complementary set $\mathcal S$ remain relatively mild. In particular, the second condition of Theorem \ref{theo:main-stability} is automatically satisfied in broad classes of interaction structures, as shown by Corollaries \ref{coro:reach} and \ref{coro:stability}, notably when the subgraph is structurally balanced for reachability and undirected for stability. A summary of our preliminary and main results is presented in Table \ref{tab:main-results}. }

\tcb{The proposed decomposition $\mc V=\mc R\cup\mc S$ naturally models heterogeneous populations  where a core group of players in $\mc R$ is coordinating (or if the subgraph restricted to $\mc R$ is structural balanced) and the remaining players in $\mc S$ represent perturbing or heterogeneous influences, such as anti-coordinating players, mixed-interaction players, or more generally exogenous disturbances in the network structure. Mixed network coordination/anti-coordination  games 
	arise as a natural special case of this decomposition and motivate the a priori distinction between the subsets $\mc R$ and $\mc S$. In particular, Corollary \ref{th:mixed_existence2} and Corollary \ref{coro:stability} directly apply to these models, establishing the existence of Nash equilibria whenever the coordinating players form a cohesive set and the anti-coordinating interactions are undirected, and the existence of a stable subset whenever $\mc R$ is also indecomposable.} Weaker preliminary results have appeared in \cite{vanelli2020games ,arditti2021equilibria}. In \cite{vanelli2020games}, the underlying graph of interactions was a complete one. In \cite{arditti2021equilibria}, each \tcb{player} was engaged in interactions of only one type, either coordinating or anti-coordinating, and the graph restricted to each of the two subgroups was assumed to be undirected.

We conclude this section by presenting a brief outline of this paper. The remainder of this section is devoted to the introduction of some basic notational conventions to be followed throughout the paper.  In Section \ref{sec:network-coordination-games}, we formally introduce {\black SNG} 
and we present few examples illustrating the problem we want to consider. In Section \ref{sec:prel}, we derive some preliminary results for three classes of signed graphs: undirected, unsigned and structurally balanced. In Section \ref{sec:results}, we present our main results, Theorems \ref{theo:main-existence} and \ref{theo:main-stability}, and show their applicability. 
Finally, Section \ref{conclusion} presents some final remarks.

 \begin{table*}[t]
	\begin{threeparttable}
		\black 	\centering
		\caption{\black Summary of preliminary and main results }
		\label{tab:main-results}
		\renewcommand{\arraystretch}{1.7}
		\begin{tabular}{p{5.5cm} p{4.8cm} p{5cm} p{1.5cm} }
			\hline
			\multicolumn{4}{c}{\textit{Preliminary results
			}}\\
			\hline
			\textbf{Assumption on $\mc G$} &
			\textbf{Property} &
			\textbf{Conclusion}&
			\textbf{Result}\\
			\hline
			Undirected  &
			Potential game & $\mc N$ 
			admits a globally stable subset
			&Proposition~\ref{pr:pot_mixed}	 \\ 		
			Unsigned & Supermodular game& $\mc N$ 
			is globally reachable &
			Proposition~\ref{prop:super-modular} \\
			Structurally balanced & Transformed via $\sigma$ 
			into unsigned $\mc G^{[\sigma]}$ &
			 $\mc N$ 
			is globally reachable & 
			Proposition~\ref{prop:structurally-balanced-equilibria} \\
			\hline
			\multicolumn{4}{c}{  \textit{Main results}\tnote{1}}  \\
			\hline
			\textbf{Assumptions on $\mc G_{\mc R}$} & 
			\textbf{Assumptions on $\mc G_{\mc S}$ \tnote{2} } &
			\textbf{Conclusion} &
			\textbf{Result} \\
			\hline
			Unsigned and cohesive as in \eqref{assumptioni} for some action $a\in \{\pm 1\}$ 
			&
			$\mc N_{\mc S}^{(a\1)} \neq \emptyset$ 
			&
			$\exists$ $x^*\in\mc N$ with $x^*_{\mc R}=a\1$ & 		Proposition~\ref{prop:mixed1}  \\
			Structurally balanced  with transformed graph $\mc G^{[\sigma]}_\mc R$ unsigned and cohesive as in \eqref{wRitau} 
			&
			$\mc N_{\mc S}^{(\tau )}\neq \emptyset $ with $\tau=\sigma_\mc R$ &
			$\exists$ $x^*\in\mc N$ with $x^*_\mc R=\tau$& Theorem \ref{theo:main-existence}
			\\
			Unsigned, indecomposable  (Def. \ref{def:indecomp} and  \eqref{h+-}) and cohesive as in \eqref{wRi} 
			&
			$\mc N_{\mc S}^{(\pm \1)}$ globally reachable (resp. admits a globally stable subset) &
			The set of $x^*$ in $\mc N$  such that $x^*_{\mc R}\in\{\pm \1\}$  is globally reachable (resp. admits a globally stable subset) & Proposition~\ref{prop:stability}
			\\
			Structurally balanced with $\mc G^{[\sigma]}_{\mc R}$ unsigned,  indecomposable and  \eqref{h+-}) and cohesive as in \eqref{wRi} &
			$\mc N_{\mc S}^{(\pm \tau)}$ globally reachable (resp. admits a globally stable subset) &
			The set of $x^*$ in $\mc N$
			such that $x^*_{\mc R}\in \{\pm \tau\} $  is globally reachable (resp. admits a globally stable subset)&
			Theorem~\ref{theo:main-stability}
			\\
			\hline
		\end{tabular}
		\begin{tablenotes}
			\item[1] The player set is $\mc V=\mc R\cup \mc S$. For $\mc U\in\{\mc R, \mc S\}$,  $\mc G_\mc U$ denotes the graph restricted to $\mc U$  and $\mc N^{y}_\mc S$ the set of NE of the game on $\mc S$ with actions of $\mc R$ fixed to $y$.
			\item[2] Notably, these assumptions are satified when $\mc G_{\mc S}$ is structurally balanced or undirected (see Corollaries \ref{th:mixed_existence1}, \ref{th:mixed_existence2},  \ref{coro:reach}, and \ref{coro:stability}).\end{tablenotes}
	\end{threeparttable}
\end{table*}

\textbf{Notational conventions}
Let $\mathds{R}$ and $\R_{+}$ denote the set of real numbers and non-negative real numbers, respectively. For a \tcb{nonempty} finite set \tcb{$\mc{V}$}, $\R^{\mc{V}}$  denotes the space of real column vectors $x$, whose entries $x_i$ are indexed by the elements $i$ in $\mc{V}$ \tcb{(e.g., $x=(x_1, x_2, x_3)$ for $\mc V=\{1,2,3\}$)}. For a vector $x$ in $\R^{\mc V}$ and a subset $\mc A\subseteq\mc V$, we use the notation $x_{\mc A}$ in $\R^{\mc A}$ for the restricted vector with entries $(x_{\mc A})_i=x_i$ for every $i$ in $\mc A$ \tcb{(e.g., $x_\mc A=(x_1, x_3)$ for $\mc A=\{1,3\}$)}. The all-$1$ vector, whose size may be deduced from the context, will be denoted by $\1$. For a vector $x$ in $\R^{\mc V}$ and some $i$ in $\mc V$, we write $x_{-i}=x_{\mc V\setminus\{i\}}$ for the vector in $\R^{\mc V\setminus\{i\}}$ obtained from \tcb{$x$} by removing its $i$-th entry \tcb{(e.g., $x_{-2}=x_{\{1,3\}}=(x_1, x_3)$)}. 
Similarly, for two finite sets $\mc U$ and $\mc V$, $\R^{\mc{U}\times\mc V}$  denotes the space of real matrices $M$ whose entries $M_{ij}$ are indexed by the pairs $(i,j)$ in $\mc{U}\times\mc V$. For a matrix $M$ in $\R^{\mc{U}\times\mc V}$ and two subsets $\mc A\subseteq\mc U$ and $\mc B\subseteq\mc V$, we use the notation $M_{\mc A\mc B}$ in $\R^{\mc A\times\mc B}$ for the restricted matrix, with entries $(M_{\mc A\mc B})_{ij}$ for every $i$ in $\mc A$ and $j$ in $\mc B$.  
For a vector $d$ in $\R^{\mc A}$, $[d]$stands for the diagonal matrix with diagonal coinciding with $d$, i.e.,  $[d]\in\R^{\mc A\times\mc A}$ is such that $[d]_{ii}=d_i$ and $[d]_{ij}=0$ for every $i$ in $\mc A$ and $j$ in $\mc A\setminus\{j\}$.

\section{Model}
\label{sec:network-coordination-games}

In this section, we introduce the signed network game model in its general form and present the main issues addressed in the rest of the paper.
\subsection{{\black Signed network games with binary actions}}
We model networks as finite directed weighted signed graphs $\mc G=(\mc V, \mc E, W)$, with non-empty set of nodes $\mc V$, set of directed links $\mc E\subseteq\mc V\times\mc V$, and weight matrix $W$ in $\R^{\mc V\times\mc V}$ such that $W_{ij} \neq 0$ if and only if $(i,j)$ is a link in $\mc E$. 
We do not allow for the presence of self-loops, equivalently, we assume that the weight matrix $W$ has zero diagonal. Throughout, finite directed weighted signed graphs without self-loops will be simply referred to as \emph{networks}.\footnote{Notice that we do not assume in general that $W_{ij}$ and $W_{ji}$ have the same sign (a property referred to as ``digon sign-symmetry'' in \cite{Altafini:2013}.)}

We shall call a network $\mc G=(\mc V, \mc E, W)$ \emph{undirected} when the weight matrix is symmetric, i.e., $W=W'$,  so that in particular $(j,i)$ is a link in $\mc E$ whenever $(i,j)$ is, and in this case they have the same weight $W_{ij}=W_{ji}$. We refer to a network $\mc G=(\mc V, \mc E, W)$ as \emph{unsigned} when the weight matrix $W$ is non-negative, i.e., all links $(i,j)$ in $\mc E$ have positive weight $W_{ij}>0$. For a network $\mc G=(\mc V, \mc E, W)$ and a subset of nodes $\mc U\subseteq\mc V$, we consider the subnetwork $\mc G_{\mc U}=(\mc U,\mc E_{\mc U},W_{\mc U\mc U})$ where $\mc E_{\mc U}=\mc E\cap(\mc U\times\mc U)$ is the subset of links with both tail node and head node in $\mc U$, while $W_{\mc U\mc U}$ is the sub-matrix of $W$ obtained by restricting its row and column sets to $\mc U$.

\begin{example}\label{ex1}
Figure \ref{fig:graph1} illustrates a network with weight values reported next to the corresponding links. Notice that, for $\mc R=\{1,\dots, 8\}$, 
the \tcb{subnetwork} $\mc G_{\mc R}$ 
\tcb{is} unsigned, whereas for  $\mc S=\mc V\setminus \mc R=\{9, \dots, 13\}$ 
the \tcb{subnetwork} $\mc G_{\mc S}$ 
\tcb{is} undirected.
\end{example}
	\begin{figure}
	\centering
	\begin{tikzpicture}[scale=1.2]
		\foreach \x/\name in {(1.3,0)/1,  (2.7,0)/3, (4,-1)/4, (4,0)/5,(5,-1)/6, (2,1)/2}\node[shape=circle,draw, fill=gray!30!white](\name) at \x {\small\name};
		
			\foreach \x/\name in {(2,-1)/7, (3.5,1)/8/-1}\node[shape=circle,draw, fill=gray!30!white ](\name) at \x {\small\name};
		\foreach \x/\name in {(6,0)/9,(3,-1.5)/10,(1,-1.5)/11,(0,0)/12,(5,1)/13}\node[shape=circle,draw ](\name) at \x {\small\name};
		
		\foreach \a/\b/\w in {4/6/2,1/7/2, 2/8/3,7/3/1}\path [<-,draw] (\a) edge node[above] {\small\w} (\b);
		
		\foreach \a/\b/\w in {  7/10/-2,3/13/-7,  13/5/-2 }\path [<-,draw, red] (\a) edge node[above] {\small\w} (\b);

		\foreach \a/\b/\w in {3/5/3,  4/10/2,3/4/1}\path [->,draw] (\a) edge node[below] {\small\w} (\b);
		
		\foreach \a/\b/\w in {  1/12/2}\path [->,draw, bend right=20] (\a) edge node[above] {\small\w} (\b);
		
		\foreach \a/\b/\w in {5/9/-1}\path[->,draw,red, bend left=20] (\a) edge node[above] {\small\w} (\b);

		\path [->,draw,bend left=20] (9) edge node[below] {\small4} (5);
		
		\path [->,draw,red, bend right=20] (12) edge node[below] {\small-4} (1);
		
		
		\foreach \a/\b/\w in {2/3/1, 4/5/3}\path [<->,draw] (\a) edge node[right] {\small\w} (\b);

		\foreach \a/\b/\w in { 12/11/-2,13/9/-5}\path [<->,draw,red] (\a) edge node[right] {\small\w} (\b);
		
		\foreach \a/\b/\w in { 11/10/-1}\path [<->,draw,red] (\a) edge node[below] {\small\w} (\b);
		
		\foreach \a/\b/\w in { 8/13/-1,7/11/-1}\path [<->,draw,red] (\a) edge node[above] {\small\w} (\b);
		
		\foreach \a/\b/\w in {1/2/3}\path [->,draw] (\a) edge node[left] {\small\w} (\b);		
	\end{tikzpicture}
	\caption{A network with 13 nodes. Weights are represented by the values on the links. Nodes in the set $\mc R= \{1, \dots, 8\}$ are colored in gray, while nodes in the complement set $\mc S= \{9, \dots, 13\}$ are white.}
	\label{fig:graph1}
\end{figure}
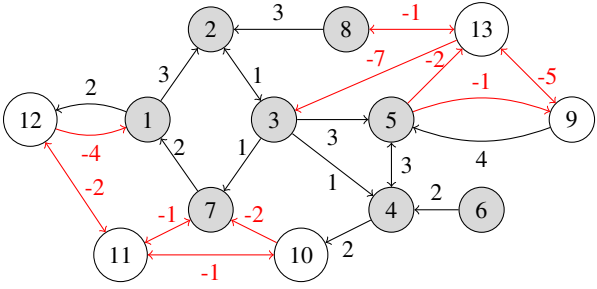
\tcb{We model the strategic interaction as a binary-action game on a signed network, where each player’s utility is the sum of two parts: an intrinsic bias toward one of the two actions and a network interaction term determined by the signed influence of neighboring players. The resulting utility function extends standard network coordination and anti-coordination games by allowing both positive and negative interactions as well as heterogeneous thresholds. 
Given a network $\mc G=(\mc V,\mc E,W)$, we associate to each node $i$ in $\mc V$ a scalar value $h_i$ in $\mathbb{R}$, representing the intrinsic bias of player $i$, and stack all such values in a vector $h$ in $\mathbb{R}^{\mc V}$, to be referred as the \emph{external field}.} 
 \begin{definition}\label{def:coordination} The {\black \emph{signed network game (SNG) }} with \emph{binary actions}  on a network $\mc G=(\mc V,\mc E,W)$ with  external field $h$ in $\R^{\mc V}$ is the strategic game with player set $\mc V$, whereby every player $i$ in $\mc V$ has action set $\mc A=\{\pm1\}$ and utility function $u_i: \mc X\rightarrow \R$ given by
	\be\label{eq:ut} u_i(x)=u_i(x_i,x_{-i})=h_ix_i+ x_i\sum_{j\in\mc V}W_{ij}x_j\,,\ee 
	for every \tcb{action profile} $x$, where $\mc X=\mc A^\mc V$ denotes the set of \tcb{action profile}s of all players and $x_{-i}$ in $\mc X_{-i}=\mc A^{\mc V\setminus\{i\}}$ stands for the \tcb{action profile} of all players except for player $i$.
\end{definition}
We now provide an interpretation of the {\black SNG}s introduced above. The utility function \eqref{eq:ut} is the sum of two terms. 
The first addend in the right-hand side of  \eqref{eq:ut} is a standalone term $h_ix_i $ (i.e., a term that depends only on the action of player $i$) that models the bias of player $i$ towards one of the two possible actions depending on the sign of the preference term $h_i$: if $h_i>0$ (respectively, $h_i<0$), then player $i$ has a bias towards action $+1$ ($-1$), in that, in the absence of the network (i.e., if $W=0$) this action would provide a utility that is $2h_i$ larger that its alternative $-1$ ($+1$); if $h_i=0$, then no bias is present in that both actions $+1$ and $-1$ would reward player $i$ with the same utility in the absence of the network.
The second addend in the right-hand side of  \eqref{eq:ut} is the aggregate of pairwise interaction terms $x_iW_{ij}x_j$, each of which can be interpreted as the utility of player $i$ in a two-player game with player $j$. Notice that, while the sum index $j$ in the right-hand side of \eqref{eq:ut} formally runs over all players $j$ in $\mc V$, the sum is effectively only over players $j$ such that $W_{ij}\ne0$, i.e., out-neighbors of player $i$ in the network $\mc G$. Observe that a {\black SNG} with binary actions is an instance of a polymatrix or pairwise-separable game \cite{Yanovskaya:1968,Arditti.Como.Fagnani:2024}.

\tcb{
\begin{example}
		Consider the network (a) in Figure \ref{fig:disc_game} with external field $h$.  The corresponding SNG has action set $\mathcal{A}=\{\pm1\}$ and utility functions
$$u_1(x_1,x_2)=h_1 x_1 + x_1 x_2\,,\qquad u_2(x_1,x_2)=h_2 x_2 - x_1 x_2\,.$$	Player $1$ (respectively, player $2$) experiences a positive (negative) interaction with player $2$ ($1$), reflecting coordination and anti-coordination incentives, respectively. The external field $h_i$ induces a bias: if $|h_i|\ge 1$, player $i$ is effectively a stubborn agent always preferring action $\mathrm{sign}(h_i)$ regardless of the neighbor’s action. When $h=0$, the game reduces to a standard discoordination game \cite{Jackson.Zenou:2015}.
	\end{example}} 

\begin{figure}
	\centering
	\begin{tikzpicture}[scale=0.9, swap]
		\node[v_c] (vc) at (0,0) {1};
		\node[v_a, draw] (va) at (2,0) {2};
		\path[->, draw, bend right=20] (vc) edge node[below] {1} (va);
		\path[<-, draw, bend left=20,red] (vc) edge node[above] {-1} (va);
		\node (1) at (1,-1) {(a)};
	\end{tikzpicture}
	\hspace{15pt}
	\begin{tikzpicture}[scale=0.9, swap]
		\node[shape=circle,draw] (1) at (0,0) {1};
		\node[shape=circle,draw] (2) at (1,1.5) {2};
		\node[shape=circle,draw] (3) at (2,0) {3};
		\path[->, draw,red] (1) edge node[left] {-1} (2);
		\path[->, draw,red] (2) edge node[right] {-1} (3);
		\path[->, draw,red] (3) edge node[below] {-1} (1);
		\node (1) at (1,-1) {(b)};
	\end{tikzpicture}	
	\caption{Two signed graphs representing: (a) the discoordination game and (b) a directed anti-coordination game.}
	\label{fig:disc_game}
\end{figure}
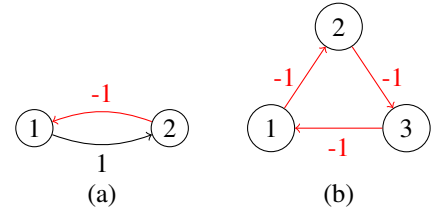

We say that a player $i$ in $\mc V$ has a \textit{coordinating} interaction with another player $j$ when $W_{ij}>0$, meaning that the utility $u_i(x)$ of player $i$ increases when player $j$ plays the same action as $i$, i.e., when $x_ix_j=1$. Conversely, when $W_{ij}<0$, player $i$ is said to have an \textit{anti-coordinating} interaction with player $j$ in $\mc V$, namely its utility decreases when player $j$ is playing the same action. 
In the special case when the network $\mc G=(\mc V,\mc E,W)$ is unsigned, namely when $W_{ij}\geq 0$ for every $i$ and $j$ in $\mc V$, the {\black SNG} with binary actions is known as a \textit{network coordination game}. On the other hand, when $W_{ij}\leq 0$ for every $i$ and $j$ in $\mc V$, the {\black SNG} is known as a \textit{network anti-coordination game}. \tcb{Another special case is when the rows of the network's weight matrix $W$ have constant sign, i.e., when $W_{ij}W_{ik}\ge0$ for every $i$, $j$, and $k$ in $\mc V$: this is the case when every player $i$ in $\mc V$ has either all coordinating interactions or all anti-coordinating interactions with her neighbors that was studied in our previous works \cite{vanelli2020games} and \cite{arditti2021equilibria}.}

Observe that our general model of {\black SNG} accounts for asymmetric behaviors: while it is possible that two nodes $i$ and $j$ have the same type of interaction with one another, it can happen that a player $i$ has a coordinating interaction with another player $j$ (i.e., $W_{ij}>0$) while $j$ has an anti-coordinating interaction with $i$ or is not influenced by player $i$ at all (i.e., $W_{ji}=0$). This makes the analysis of such games particularly challenging.

\subsection{Best Responses and Nash Equilibria}

The \textit{best response} (BR) correspondence for a player $i$ in $\mc V$ is the set of its optimal actions given the \tcb{action profile} or the other players. 
Formally,\footnote{\tcb{Observe that, since the utility function of player $i$ depends only on the actions $x_i$ of player $i$ and on the action profile $x_{\mc N_i}$ of the players her neighborhood $\mc N_i=\{j\in\mc V:W_{ij}>0\}$ in the network, so does the best response $\mc B_i(x_{-i})$. Specifically, one has that, if $x$ and $y$ in $\mc X$ are two action profiles such that $x_{\mc N_i}=y_{\mc N_i}$, then $\mc B_i(x_{-i})=\mc B_i(y_{-i})$. Such locality property of the best response correspondence is a common feature of all graphical games (c.f.~\cite[Ch.9]{Jackson:2008} or \cite{Galeotti.ea:2010,Jackson.Zenou:2015}).}}
$$\B_i(x_{-i})= \argmax_{x_i \in\mc A}u_i(x_i,x_{-i}) \,.$$
A (pure strategy) \emph{Nash equilibrium} is a \tcb{action profile} $x^*$ in $\mc X$ such that 
$$x^*_i\in\B_i(x^*_{-i})\,,\qquad\forall i\in\mc V\,.$$
The set of Nash equilibria of a game will be denoted as $\mc N$. 
A Nash equilibrium $x^*$ in $\mc N$ is called \emph{strict} if $\mc B_i(x^*_{-i}) =\{x^*_i\}$ for every $i$ in $\mc V$. The set of strict Nash equilibria of a game will be denoted as $\mc N^*$. 

The purpose of this work is to investigate the existence of Nash equilibria for {\black SNG}s and study their stability properties under best response transformations. Formally, consider two \tcb{action profile}s $x$ and $y$ in $\mc X$. For $l\ge0$, a length-$l$ \emph{best response path} (\emph{BR-path}) from $x$ to $y$ is an $(l+1)$-tuple of \tcb{action profile}s in $\mc X^{l+1}$, denoted with 
$(x^{(0)},x^{(1)},\ldots x^{(l)})$,  such that:
\begin{itemize}
	\item $x^{(0)}=x$, and $x^{(l)}=y$; 
	\item for every $k=1,2,\ldots, l$, there exists a player $i_k$ in $\mc V$ such that  
	\be\label{eq:BRpath}\!\!\!\!\!\!\! x^{(k)}_{-i_k}=x^{(k-1)}_{-i_k}\,, \qquad 	x_{i_k}^{(k)}\in\B_{i_k}(x_{-i_k}^{(k-1)})\setminus\{x^{(k-1)}_{i_k}\}\,.\ee 
\end{itemize}
A BR-path is thus a sequence of unilateral modifications of the \tcb{action profile} where one \tcb{player} at a time changes its action to any of its best responses to the current \tcb{action profile} of the rest of the population.\footnote{\tcb{Observe that the indices of the updating players $(i_1, i_2, \ldots, i_l)$ in a BR-path trace out an underlying activation sequence in the sense of \cite{sakhaei2023equilibration}.}} 
Notice that, with the above definition, for every \tcb{action profile} $x$ in $\mc X$,   there always exists a length-$0$ BR-path $(x)$ from $x$ to itself. 

We shall refer to a subset of \tcb{action profile}s $\mc X^*\subseteq\mc X$ as:
\begin{itemize}
	\item \emph{BR-reachable} from strategy $x$ in $\mc X$ if there exists a BR-path from $x$ to some \tcb{action profile} $y$ in $\mc X^*$;  
	\item \emph{globally} BR-reachable if it is BR-reachable from every \tcb{action profile} $x$ in $\mc X$;
	\item \emph{BR-invariant} if there are no BR-paths from any $y$ in $\mc X^*$ to any $z$ in $ \mc X\setminus\mc X^*$;
	\item \emph{globally BR-stable} if it is globally BR-reachable and BR-invariant.
\end{itemize}

Notice that a BR-reachable (hence, in particular a globally BR-reachable or globally BR-stable) set of \tcb{action profile}s $\mc X^*$ is necessarily non-empty. 
Observe that a singleton $\{x^*\}$ is BR-invariant if and only if $x^*$ is a strict Nash equilibrium. 
A globally BR-stable set is one from which it is impossible to exit through unilateral best response modifications and that can be reached from every initial \tcb{action profile} through a finite number  of such modifications.

\begin{remark} The notions of BR-reachability, BR-invariance, and BR-stability are particularly relevant for the study of so-called asynchronous best response dynamics \cite{Blume:1995}.  
These are a class of asynchronous dynamics modeled as discrete-time Markov chains with finite state space coinciding with the \tcb{action profile} set $\mc X$, whereby, at every time step $t=0,1,2,\ldots$, conditioned on the current configuration $X(t)=x$, one player $i$ is randomly selected from  the player set $\mc V$ according to a distribution that assigns positive probability to all players, and she updates her action to a value $X_i(t + 1)$ chosen uniformly from her best response set $\mc B_i(x_{-i})$. In fact, observe that, on the one hand, a subset of \tcb{action profile}s  $\mc X^*\subseteq\mc X$ is  BR-invariant if and only if it is a trapping set for the asynchronous best response dynamics, i.e., if $X(s)\in\mc X^*$ for some $s\ge0$ implies that $X(t)\in\mc X^*$ for every $t\ge s$.  On the other hand, $\mc X^*$ is globally BR-stable if and only if, for every probability distribution of the initial state $X(0)$ of the asynchronous best response dynamics, there exists a random time $T$ that is finite with probability $1$ and is such that $X(t)\in\mc X^*$ for every $t\ge T$. \end{remark}

\begin{remark}\label{remark:I-reachable}
 In our previous work \cite{arditti2024robust}, the notion of improvement path (I-path) was introduced along with the related notions of I-reachability, I-invariance, and I-stability. Specifically, a length-$l$  I-path is a $(l+1)$-tuple of \tcb{action profile}s  $(x^{(0)},x^{(1)},\ldots x^{(l)})$ such that for every $k=1,2,\ldots, l$ there exists a player $i_k$ in $\mc V$ such that  
\be\label{eq:Ipath} x^{(k)}_{-i_k}=x^{(k-1)}_{-i_k}\,, \qquad 	u_{i_k}(x^{(k)})>u_{i_k}(x^{(k-1)}) \,.\ee 
Notice that, if the action space of every player is binary, as is our case, then condition \eqref{eq:Ipath} implies condition \eqref{eq:BRpath}, so that every I-path is also a BR-path, but not necessarily vice versa. As a consequence, every (globally) BR-reachable set of \tcb{action profile}s is (globally) BR-reachable. Conversely, every BR-invariant  set of \tcb{action profile}s is I-invariant, but not vice versa: e.g., every non-strict Nash equilibrium is I-invariant but not BR-invariant. On the other hand, neither I-stability implies BR-stability nor vice versa: e.g., for the {\black SNG} of Example \ref{ex:onelink} with $\alpha=0$, \tcb{the set 
$\mc N$} is globally I-stable but not BR-stable. 
\end{remark}

The following two examples show how Nash equilibria may not exist for {\black SNG}s. {\black In particular, Example \ref{ex:3ringanticoord} illustrates how nonexistence may occur even in \emph{purely anti-coordination} games when the graph is \emph{directed}.}

	\addtocounter{example}{-1}
\begin{example}[cont'd]\label{ex:discoord}
\tcb{The discoordination game, namely, the} {\black SNG} on the network (a) in Figure \ref{fig:disc_game} with external field $h=0$ 
\tcb{and} utilities $u_1(x_1, x_2) = x_1x_2$ 
and $u_2(x_1, x_2) =-x_1x_2$, 
\tcb{admits no pure strategy Nash equilibria}.
\end{example} 
\begin{example}\label{ex:3ringanticoord}
The {\black SNG} with binary actions on the network (b) in Figure \ref{fig:disc_game} with external field $h=0$ is an anti-coordination game. For $x^*$ to be a Nash equilibrium, one should have $x_1^*=-x_2^*=x_3^*=-x_1^*$, which is impossible in $\{\pm1\}^3$. Hence, this game does not admit any pure strategy  Nash equilibria.
\end{example}

Even when \tcb{Nash equilibria of {\black SNG}s exist, they} may fail to be  globally BR-reachable or BR-invariant, as illustrated in the following two examples. 

\begin{example}\label{ex:onelink} Consider a {\black SNG} on a network $\mc G=(\mc V,\mc E,W)$ with two nodes  $\mc V=\{1,2\}$, a single link $\mc E=\{(1,2)\}$, and weight matrix $W=(1,0)(0,1)'\,,$ 
and  external field $h=(0,\alpha)$, where $\alpha$ is a scalar parameter. Then, the set of pure strategy  Nash equilibria is 
$$\mc N=\left\{\ba{lcl}
\{-\1\}&\se&\alpha<0\,,\\
\{\pm\1\}&\se&\alpha=0\,,\\
\{\1\}&\se& \alpha>0\,.
\ea\right.$$
It easily verifiable that $\mc N$ is globally BR-reachable for every $\alpha$. 
In fact, for $\alpha\ne0$ then the unique Nash equilibrium is strict, so that $\mc N$ is BR-invariant, hence BR-stable. On the other hand, for $\alpha=0$, there are two Nash equilibria neither of which is strict: in this case, neither $\mc N$ nor any of it subsets are 
\tcb{BR-invariant}, and there are no BR-stable sets. 
\end{example}

\begin{example} \label{ex:nostability} 

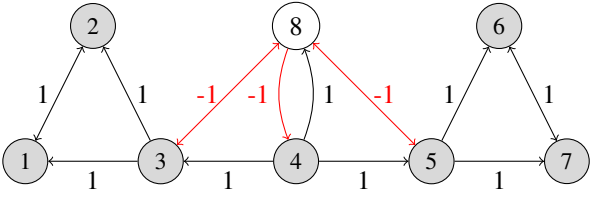
\begin{figure}
	\centering	
	
	\vspace{8pt}
	\begin{tikzpicture}[scale = 0.9]
	
		\node[shape=circle,draw] (8) at (4,2){8} ;
		
		\foreach \x/\name in {(0,0)/1,   (1,2)/2, (2,0)/3,  (4,0)/4,  (6,0)/5,(7,2)/6,  (8,0)/7}\node[shape=circle,draw, fill=gray!30!white](\name) at \x {\small\name};
		
		\path [<->,draw] (1) edge node[left] {1} (2);
		\path [->,draw] (3) edge node[right] {1} (2);
		\path [->,draw] (3) edge node[below] {1} (1);
		\path [<-,draw] (3) edge node[below] {1} (4);
		\path [<->,draw=red, text=red] (3)  edge node[left] {-1} (8);
		\path [<->,draw=red, text=red] (5)  edge node[right] {-1} (8);
		\path[->, draw, bend right=20] (4) edge node[right] {1} (8);
		\path[<-, draw, bend left=20,red] (4) edge node[left] {-1} (8);
		\path [->,draw] (4) edge node[below] {1} (5);

		\path [->,draw] (5) edge node[left] {1} (6);
		\path [->,draw] (5) edge node[below] {1} (7);
		\path [<->,draw] (6) edge node[right] {1} (7);
	\end{tikzpicture}
	\caption{A signed graph with $8$ nodes and a coordinating set made of $7$ players (in gray). 
	}
	\label{fig:noeq_nostability}
\end{figure}	

Consider the signed-network related to the graph in Figure \ref{fig:noeq_nostability} with $h=0$. Observe that the set $\mc R= \{1,\dots, 7\}$ is a coordinating set. A direct check shows that the set of Nash equilibria is $\mc N=\{\pm x^*\}$, where $x^*$ is the \tcb{action profile} with $x^*_{\mc R} =\1$  and $x^*_8 = -1$. However, $\mc N$ is not globally BR-reachable, as the subset of \tcb{action profile}s $$\mc X_{\mc R}=\{y\in\mc X:\,y_1=y_2=y_3=-y_5=-y_6=-y_7\}\,,$$ is BR-invariant. 
	
\end{example}

\section{Preliminary results}\label{sec:prel}

In this subsection, we shall first present some preliminary results on two special classes of {\black SNG}s: {\black SNG}s on undirected networks and network coordination games (i.e., {\black SNG}s on unsigned networks). We will rely on standard results on super-modular and potential games, respectively, to show that the set of Nash equilibria $\mc N$ of both these classes of games is non-empty and globally BR-reachable, and, for undirected {\black SNG}s, that $\mc N$ contains a non-empty globally BR-stable subset. We will then present a number of examples of {\black SNG}s outside these two classes that illustrate the specific challenge in the study of the existence and stability of Nash equilibria of general {\black SNG}s.

\subsection{{\black SNG}s on undirected networks}
A strategic game with \tcb{action profile} space $\mc X$  is called \emph{exact potential} if there exists a \emph{potential function} $\Phi:\mc X\to\R$ 
such that,  for every player $i$ in $\mc V$ and \tcb{action profile}s $x$ and $y$ in $\mc X$,  
\begin{equation}\label{potential_function}
x_{-i}=y_{-i}\quad\Longrightarrow\quad u_i(y)-u_i(x) =\Phi(y)-\Phi(x)\,,
 \end{equation}
It is well known \cite{Monderer.Shapley:1996} that the set of Nash equilibria $\mc N$ of every finite exact potential game  is always nonempty: in particular, every global maximum point $x^*$  of the potential function $\Phi(x)$ is a Nash equilibrium.  The following result states that a {\black SNG} on a network $\mc G$ is an exact potential game if and only if $\mc G$ is undirected and that, in this case, its set of Nash equilibria contains  a nonempty globally BR-stable subset.

 \begin{proposition}\label{pr:pot_mixed}
Consider a {\black SNG} with binary actions on a network $\mc G=(\mc V,\mc E,W)$ with  external field $h$.
Then, 
 \begin{enumerate}
 \item[(i)] the game is exact potential if and only if $\mc G$ is undirected.  \end{enumerate}
Moreover, if $\mc G$ is undirected, then:  
 \begin{enumerate}
 \item[(ii)]  $\Phi:\mc X\to\R$ is an exact potential function for the {\black SNG} if and only if there exists a constant $C$ in $\R$ such that 
\be\label{potential_function_het}
\Phi(x)=\frac{1}{2}\sum_{i,j\in \mc V}W_{ij}x_i x_j + \sum_{i\in \mc V} h_i x_i+C\,,  
\ee
for every \tcb{action profile} $x$ in $\mc X$;
 \item[(iii)]   there exists a globally BR-stable set $\ov{\mc N}$ such that 
 $$\argmax_{x\in\mc X}\Phi(x)\subseteq\ov{\mc N}\subseteq\mc N\,.$$
 \end{enumerate}
\end{proposition}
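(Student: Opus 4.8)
The plan is to dispatch parts (i)–(ii) together by a direct verification plus the standard loop characterization, and to reserve the real work for part (iii), which I would handle by a recurrence argument on the best-response digraph.

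\emph{Parts (i) and (ii).} First I would prove the ``if'' direction of (ii) by direct expansion. Assuming $W=W'$, I fix a player $i$ and profiles $x,y$ with $x_{-i}=y_{-i}$ and compute $\Phi(y)-\Phi(x)$ for the candidate in \eqref{potential_function_het}. Because $W_{ii}=0$ and $W_{ij}=W_{ji}$, the factor $\tfrac12$ is exactly compensated when collecting the two copies of each coupling involving $i$, so the only surviving contributions are $h_i(y_i-x_i)$ and $(y_i-x_i)\sum_{j\ne i}W_{ij}x_j$, which is precisely $u_i(y)-u_i(x)$; this both verifies the potential identity \eqref{potential_function} and proves the ``if'' direction of (i). For the ``only if'' in (ii), I note that any two exact potentials share the same single-coordinate increments, so their difference is invariant under every one-bit flip; since $\mc X=\{\pm1\}^{\mc V}$ is connected under such flips, the difference is a global constant $C$. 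For the ``only if'' direction of (i) I would invoke the Monderer–Shapley four-cycle characterization \cite{Monderer.Shapley:1996}: flipping players $i$ and $j$ around the closed loop $(x_i,x_j)=(+,+)\to(-,+)\to(-,-)\to(+,-)\to(+,+)$ and summing the increments of the moving player, all bias terms $h_i,h_j$ and all interactions with fixed third parties cancel, and the loop sum collapses to $4(W_{ji}-W_{ij})$; requiring it to vanish for all $i,j$ forces $W=W'$, i.e.\ $\mc G$ undirected.

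\emph{Part (iii).} The essential observation is that best-response transitions are nondecreasing in $\Phi$: if $y$ arises from $x$ by a single best-response move of player $i$ as in \eqref{eq:BRpath}, then $u_i(y)\ge u_i(x)$ by optimality, and \eqref{potential_function} gives $\Phi(y)-\Phi(x)=u_i(y)-u_i(x)\ge0$. I would then form the finite directed graph $D$ on $\mc X$ whose arcs are the single best-response transitions, and set $\ov{\mc N}$ equal to the union of its terminal (closed) strongly connected components. Since the condensation of any finite digraph is an acyclic DAG possessing a sink reachable from every vertex, $\ov{\mc N}$ is globally BR-reachable; since a terminal component has no outgoing arc, $\ov{\mc N}$ is BR-invariant; hence $\ov{\mc N}$ is globally BR-stable. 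To place it between $\argmax\Phi$ and $\mc N$, I argue as follows. On a terminal component $\mc C$ one can travel from any state to any other and back, so $\Phi$, being nondecreasing along every arc, is constant on $\mc C$; thus every internal arc is an indifference move, and a strictly improving deviation at some $x\in\mc C$ would yield an arc staying inside the closed set $\mc C$ yet strictly raising $\Phi$, a contradiction. Therefore every state of $\mc C$ is a Nash equilibrium and $\ov{\mc N}\subseteq\mc N$. Finally, a global maximizer $x^*$ admits only indifference moves (any best response keeps $\Phi$ at its maximum), these moves are reversible, and the maximal level set is closed under best-response transitions; consequently the component of $x^*$ has no outgoing arc and is terminal, giving $\argmax\Phi\subseteq\ov{\mc N}$.

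The main obstacle is precisely part (iii): because best responses need only weakly improve utility, one cannot simply run strict improvement paths, which would terminate at arbitrary (possibly non-global) Nash equilibria and need not land in a BR-invariant set. The crux is the forced constancy of $\Phi$ on each terminal component, which does double duty — it certifies that terminal components consist of Nash equilibria, and, through the reversibility of indifference moves together with closedness of the maximal level set, it guarantees that every global maximizer of $\Phi$ already lies in a terminal component.
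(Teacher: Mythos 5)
Your proposal is correct. For parts (i) and (ii) you follow essentially the same route as the paper: the ``if'' direction is the same direct computation exploiting $W_{ij}=W_{ji}$ and $W_{ii}=0$, the ``only if'' direction of (i) is the same four-configuration loop whose increment sum collapses to $4(W_{ji}-W_{ij})$ (the paper writes out the configurations explicitly and then invokes \cite[Corollary 2.9]{Monderer.Shapley:1996}), and your uniqueness-up-to-a-constant argument for (ii) via connectedness of the hypercube under single flips is just an inlined proof of \cite[Lemma 2.7]{Monderer.Shapley:1996}, which the paper cites instead. The genuine difference is in part (iii): the paper disposes of it in one line by citing \cite[Lemma 2]{Catalano.ea:2024}, whereas you give a self-contained proof by taking $\ov{\mc N}$ to be the union of the terminal strongly connected components of the best-response transition digraph, using that $\Phi$ is nondecreasing along BR arcs, hence constant on any terminal component, hence (by closedness) no state there admits a strictly improving deviation, and that indifference moves out of a global maximizer are reversible and confined to the top level set, so each maximizer's component is terminal. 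This argument is sound --- including the subtlety that non-strict Nash equilibria can have outgoing arcs and so need not form singleton invariant sets --- and what it buys is transparency and independence from the external reference; what the citation buys the paper is brevity and a statement applicable beyond this specific construction. Either way the conclusion stands.
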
 
\begin{proof}
(i) Consider two \tcb{action profile}s $x$ and $y$ in $\mc X$ such that $x_{-i}=y_{-i}$ and $y_i=1=-x_i$ for some player $i$ in $\mc V$. Then, from \eqref{eq:ut} we have  that 
\be\label{uiy-x} u_i(y)-u_i(x) =2\sum_{k\in\mc V}W_{ik}x_k+2h_i\,.\ee

If $\mc G$ is undirected, then \eqref{potential_function_het} and \eqref{uiy-x} imply that 
\be\label{phiy-x} 
\ba{rcl}\Phi(y)-\Phi(x)&=&\ds\sum_{k\in\mc V}W_{ik}x_k+\sum_{k\in\mc V}W_{ki}x_k+2h_i\\[15pt]
&=&\ds2\sum_{k\in\mc V}W_{ik}x_k+2h_i\\[15pt]
&=&u_i(y)-u_i(x)\,,\ea\ee
thus proving that the {\black SNG} with binary actions on $\mc G$ with  external field $h$ is an exact potential game. 

On the other hand, if $\mc G$ is not undirected, let $i\ne j$ in $\mc V$ be such that \be\label{Wij<Wji}W_{ij}< W_{ji}\,,\ee 
and consider a configuration $x$ such that $x_i=x_j=-1$.  Let $y$, $w$, and $z$ in $\mc X$ be the configuration such that, respectively: $y_i=1$ and $y_{-i}=x_{-i}$;  $w_j=1$ and $w_{-j}=x_{-j}$; $z_{-j}=y_{-j}$ and $z_j=1$. Observe that 
\be\label{uiz-w} \ba{rcl}u_i(z)-u_i(w) &\!\!\!=\!\!\!&\ds2\sum_{k\in\mc V}W_{ik}w_k+2h_i\\[15pt]
&\!\!\!=\!\!\!&\ds2\sum_{k\in\mc V}W_{ik}x_k+2h_i+2\sum_{k\in\mc V}W_{ik}(w_k-x_k)\\[15pt]
&\!\!\!=\!\!\!&u_i(y)-u_i(x)+4W_{ij}\,,\ea\ee
where the last identity follows from the fact that $w_{-j}=x_{-j}$ and $w_i=1=-x_j$. 
Similarly, we have that 
\be\label{ujz-y} \ba{rcl}u_j(z)-u_j(y) &\!\!\!=\!\!\!&2\ds\sum_{k\in\mc V}W_{jk}y_k+2h_j\\[15pt]
&\!\!\!=\!\!\!&2\ds\sum_{k\in\mc V}W_{jk}x_k+2h_j+2\sum_{k\in\mc V}W_{jk}(y_k-x_k)\\[15pt]
&\!\!\!=\!\!\!&u_j(w)-u_j(x)+4W_{ji}\,,\ea\ee
where the last identity follows from the fact that $y_{-i}=x_{-i}$ and $y_i=1=-x_i$. 
It the follows from \eqref{Wij<Wji}, \eqref{uiz-w}, and \eqref{ujz-y} that
\be\label{no-potential}
\ba{rcl}0
&<&4(W_{ji}-W_{ij})\\[5pt]
&=&u_i(y)-u_i(x)+u_j(x)-u_j(w)\\[5pt]
&&+u_i(w)-u_i(z)+u_j(z)-u_j(y)\,.
\ea
\ee
It then follows from \cite[Corollary 2.9]{Monderer.Shapley:1996}  that the {\black SNG} with binary actions on $\mc G$ with external field $h$ is not an exact potential game.

(ii) If $\mc G$ is undirected, then \eqref{phiy-x} implies that $\Phi$ defined in \eqref{potential_function_het} is an exact potential function for  the {\black SNG} with binary actions on $\mc G$ with  external field $h$. The claim then follows from  \cite[Lemma 2.7]{Monderer.Shapley:1996}.

(iii) The claim is a direct consequence of \cite[Lemma 2]{Catalano.ea:2024}. 
\end{proof}


\subsection{{\black SNG}s on unsigned networks}
A finite strategic game where all players have binary action set $\mc A=\{\pm1\}$ is \emph{super-modular}\cite{Topkins:1979,Milgrom.ea:1990,Vives:1990,Topkins:1998} if it satisfies the \emph{increasing difference property}, i.e., if 
\be\label{IDP}u_i(+1,x_{-i})-u_i(-1,x_{-i})\le u_i(+1,y_{-i})-u_i(+1,y_{-i})\,,\ee
for every two \tcb{action profile}s $x$ and $y$ in $\mc X$ such that $x\le y$. 
For super-modular games, Nash equilibria are guaranteed to exist and they form a lattice in the \tcb{action profile} space.
The following result states that a {\black SNG} is super-modular if and only if it is a network coordination game and that, in this case, the set of its Nash equilibria is non-empty and globally BR-reachable. 

\begin{proposition}  \label{prop:super-modular}
Consider a {\black SNG} with binary actions on a network $\mc G=(\mc V,\mc E,W)$ with  external field $h$.
Then, 
 \begin{enumerate}
 \item[(i)] the game is super-modular if and only if $\mc G$ is unsigned;  
 \item[(ii)] if $\mc G$ is unsigned, then the set of Nash equilibria $\mc N$ is globally BR-reachable.
 \end{enumerate}
\end{proposition}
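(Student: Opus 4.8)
For part (i), I would start by computing the incremental utility from the increasing difference property directly from the definition \eqref{eq:ut}. The key observation is that for a binary-action game, the increasing difference property \eqref{IDP} need only be checked coordinate-wise: it suffices to verify it when $y$ differs from $x$ in a single coordinate $j \neq i$, with $y_j = +1$ and $x_j = -1$. First I would compute, using \eqref{eq:ut}, that
\be
u_i(+1,x_{-i}) - u_i(-1,x_{-i}) = 2h_i + 2\sum_{k\in\mc V} W_{ik}x_k\,.
\ee
The dependence on $x_{-i}$ enters only through the sum $\sum_{k} W_{ik}x_k$. Increasing a single coordinate $x_j$ from $-1$ to $+1$ (for $j \neq i$) changes this difference by $4W_{ij}$. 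Hence the increasing difference property holds for every such coordinate pair if and only if $W_{ij}\ge 0$ for all $i \neq j$, which (since $W$ has zero diagonal) is exactly the condition that $\mc G$ is unsigned. This gives the equivalence in (i). The main thing to be careful about is the reduction to single-coordinate changes: I would note that a general $x \le y$ can be reached by flipping one coordinate at a time, so monotonicity of the difference coordinate-wise implies the full inequality \eqref{IDP}.

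For part (ii), the plan is to invoke the supermodular structure established in (i) together with standard monotonicity results. Since the game is supermodular, its best response correspondence is monotone, and the lattice of Nash equilibria has a smallest element $\ul x$ and a largest element $\ov x$, obtained as limits of best-response iterations started from the bottom profile $-\1$ and the top profile $\1$, respectively. To show $\mc N$ is globally BR-reachable from an arbitrary $x \in \mc X$, I would use the standard trick of running a monotone best-response dynamics: from $x$, repeatedly let players whose current action is below their best response switch upward (never downward). Because the game is supermodular, this process is monotone non-decreasing and confined to a finite lattice, so it terminates; and at termination no player can improve upward. One then argues this fixed point of the upward dynamics is actually a Nash equilibrium (no player wants to move in either direction), using that from a profile where only upward improving moves were ever available, the best responses cannot drop below the current profile. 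This produces a BR-path from $x$ into $\mc N$.

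The hard part will be making the termination-and-reachability argument fully rigorous rather than hand-waving via "standard supermodular theory." Specifically, the delicate point is showing that the monotone upward best-response path actually ends at a Nash equilibrium and not merely at a profile where upward moves are exhausted but some downward improving move remains; this requires exploiting that best responses in a supermodular game are themselves monotone, so that starting the upward dynamics from below the eventual equilibrium keeps the iterates sandwiched. I expect to lean on the characterization of best-response monotonicity from the increasing difference property, and possibly cite an existing convergence result for asynchronous best-response dynamics in supermodular binary games (such as the result attributed to \cite{Ramazi.Cao:2020} in the introduction) to close the gap cleanly.
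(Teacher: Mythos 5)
Your part (i) is correct and is essentially the paper's own argument: the paper likewise computes $\delta_i(x)=u_i(+1,x_{-i})-u_i(-1,x_{-i})=2h_i+2\sum_j W_{ij}x_j$, notes that $\delta_i(y)-\delta_i(x)$ is a nonnegative combination of the weights $W_{ij}$ whenever $x\le y$ and $W\ge0$, and uses a single-coordinate flip for the converse. Nothing to fix there.

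Part (ii) is where you diverge from the paper, and your sketched argument has a genuine gap. The paper does not prove convergence from scratch: it invokes \cite[Proposition 3(v)]{arditti2024robust} (global I-stability of $\mc N$ for super-modular games) plus the observation that I-reachability implies BR-reachability. You instead propose an upward-only best-response dynamics from an arbitrary profile $x$, letting players below their best response move up until such moves are exhausted. This does not terminate at a Nash equilibrium in general: the termination condition only guarantees that players currently at $-1$ are best responding, and says nothing about players sitting at $+1$ who strictly prefer $-1$ but are never permitted to move down. For a concrete failure, take the game of Example \ref{ex:onelink} with $\alpha<0$ and start from $\1$: no player is at $-1$, so your dynamics makes no move at all, yet $\1\notin\mc N$ since player $2$ strictly prefers $-1$. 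Your proposed rescue (``starting the upward dynamics from below the eventual equilibrium keeps the iterates sandwiched'') does not apply, because an arbitrary initial profile need not lie below any Nash equilibrium. The standard elementary repair is a two-phase argument: first exhaust all strictly improving upward flips (the profile is nondecreasing, so this terminates, and at termination every player at $-1$ has $\delta_i\le0$); then exhaust all strictly improving downward flips (the profile is nonincreasing, so this terminates, and by the increasing-difference property every player at $-1$ has $\delta_i$ only decreasing during this phase, so such players remain at a best response throughout). The endpoint is a Nash equilibrium, and since every step is a strict improvement of a binary action it is a valid BR step, yielding the required BR-path. Either patch your argument along these lines or, as the paper does, close part (ii) by citation.
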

\begin{IEEEproof} (i) For every \tcb{action profile} $x$ in $\mc X$ and player $i$ in $\mc V$, let 
$$\delta_i(x)=u_i(+1,x_{-i})-u_i(\tcb{-}1,x_{-i})=2h_i+2\sum_{j\in\mc V}W_{ij}x_j\,.$$
Then, for every two \tcb{action profile}s  $x$ and $y$ in $\mc X$, we have
\be\label{dy-dx}\delta_i(y)-\delta_i(x)=\sum_{j\in\mc V}W_{ij}(y_j-x_j)\,,\ee
for every $i$ in $\mc V$. 
If $W_{ij}\ge0$ for every $i$ and $j$ in $\mc V$, then \eqref{dy-dx} implies that 
$$\delta_i(y)-\delta_i(x)=\sum_{j\in\mc V}W_{ij}(y_j-x_j)\ge0\,,$$ 
whenever $x\le y$, so that the increasing difference property \eqref{IDP} holds true, hence {\black SNG} is super-modular.  

Conversely, if $W_{ij}<0$ for some $i$ and $j$ in $\mc V$, then let $x$ and $y$  in $\mc X$ be two \tcb{action profile}s such that $x_j=-1$, $y_j=+1$, and $x_{-j}=y_{-j}$, so that $x\le y$. Then,  \eqref{dy-dx} implies that
$$\delta_i(y)-\delta_i(x)=2W_{ij}<0\,,$$ so that the increasing difference property \eqref{IDP} does not hold true, hence the {\black SNG} is not super-modular.  

(ii) This is a direct consequence of \cite[Proposition 3(v)]{arditti2024robust}, which asserts that the set of Nash equilibria is globally I-stable (i.e., globally I-reachable and I-invariant) and the fact that I-reachability implies BR-reachability (c.f.~Remark \ref{remark:I-reachable}) \footnote{\tcb{This is a standard result and could also be derived, for instance, 
		from \cite[Theorem 1]{sakhaei2023equilibration}.}} 
\end{IEEEproof}\medskip

\begin{remark} Proposition \ref{prop:super-modular} ensures that every network coordination game, i.e., every {\black SNG} with binary actions on an unsigned network $\mc G$, admits Nash equilibria and that the set $\mc N$ of Nash equilibria is globally reachable.  Notice that, $\mc N$ may not be invariant, as Example \ref{ex:onelink} illustrates. 
\end{remark}
\subsection{Signed network games on structurally balanced networks}
The results in the previous subsection can be extended by introducing the notion of structural balance \cite{Harary:1953,Cartwright:1956}. 
This is a property of signed networks that corresponds to the possibility of exactly partitioning the signed graph into two adversary sub-communities, such that all links within each sub-community have positive weight, whereas all links between nodes of different communities have negative weights.  
When the graph is structurally balanced, the {\black SNG} can be transformed through a change of variables into a network coordination game. Consensus \tcb{action profile}s in the network coordination game corresponds to \tcb{action profile}s that take opposite sign in the two communities of the structurally balanced graph. 

First, to every $\sigma$ in $\mc X=\{\pm1\}^{\mc V}$, we can associate a diagonal matrix $[\sigma]$ in $\R^{\mc V\times\mc V}$ with diagonal entries \be\label{sigma}[\sigma]_{ii}=\sigma_i\,,\qquad \forall i\in\mc V\,.\ee
Such matrices identify linear operators in $\R^n$ that are referred to as \textit{gauge transformations} in some of the literature \cite{Altafini:2013}. 

\begin{definition}
For a network $\mc G=(\mc V,\mc E,W)$, and a vector $h$ in $\R^{\mc V}$, consider the {\black SNG} with binary actions on $\mc G$ with  external field $h$. Given $\sigma$ in $\mc X$, let the $[\sigma]$-transformed network be 
\be\label{sigmaG}\mc G^{[\sigma]}=(\mc V,\mc E,W^{[\sigma]})\,,\qquad W^{[\sigma]}=[\sigma]W[\sigma]\,,\ee
and the $[\sigma]$-transformed external field \be\label{sigmad}h^{[\sigma]}=[\sigma]h\,.\ee
Then, the \emph{$[\sigma]$-transformed} game is the {\black SNG} with binary actions on $[\sigma]$-transformed network  $\mc G^{[\sigma]}$  with $[\sigma]$-transformed external field $h^{[\sigma]}$. 
\end{definition}
Observe that the utility function of every player $i$ in $\mc V$ in the $[\sigma]$-transformed game defined above is given by  
\tcb{\be\label{sigmau}\ba{rcl}\ds u^{[\sigma]}_i(x)
&=&\ds h_i^{[\sigma]}x_i+ x_i\sum_{j\in\mc V}(W^{[\sigma]})_{ij}x_j\\[5pt] 
&=&\ds h_i\sigma_ix_i+ x_i\sigma_i\sum_{j\in\mc V}W_{ij}\sigma_jx_j\\[5pt] 
&=&\ds h_i([\sigma]x)_i+ ([\sigma]x)_i\sum_{j\in\mc V}\sigma_iW_{ij}([\sigma]x)_j\\[5pt] 
&=&\ds  u_i([\sigma]x)\,,\ea\ee}
for every \tcb{action profile} $x$ in $\mc X$, where the first and the last identities follow from \eqref{eq:ut}, the second one from \eqref{sigmaG} and \eqref{sigmad}, and the third one from \eqref{sigma}. 
Equation \eqref{sigmau} shows that, if we apply a gauge transformation $[\sigma]$ to a {\black SNG}, we obtain a new {\black SNG} whose utility functions coincide with the ones of the original game evaluated in the transformed \tcb{action profile}s $[\sigma]x$.
This immediately leads to the following result. 
\begin{lemma}\label{lemma:NashGauge} Consider a {\black SNG} with binary actions on a network $\mc G=(\mc V,\mc E,W)$ with external field $h$, and let $\mc N$ be the set of its Nash equilibria. Given $\sigma$ in $\mc X$, consider the $[\sigma]$-transformed  game and let $\mc N_{[\sigma]}$ be the set of its Nash equilibria. Then, 
\be\label{Nash=Nash}\mc N=\{[\sigma]x^*:\,x^*\in\mc N_{[\sigma]}\}\,.\ee
\end{lemma}
\begin{IEEEproof}
See Appendix \ref{sec:proof-lemma-NashGauge}.\end{IEEEproof}\medskip


 Gauge transformations are particularly useful when the graph is structurally balanced, as per the following definition.
\begin{definition}\label{sb} Consider a network $\mc G=(\mc V,\mc E, W)$. Then: 
\begin{enumerate}
\item[(i)] a \emph{balanced partition} of $\mc G$ is a binary partition 
\be\label{balanced-partition} \mc V_1\cup \mc V_2=\mc V\,,\qquad \mc V_1\cap \mc V_2=\emptyset\,,\ee
of its node set $\mc V$ such that 
\be\label{balanced-partition-1} W_{ij}\geq 0\,,\qquad \forall i,j \in \mc V_q\,,\ q=1,2\,,\ee
and
\be\label{balanced-partition-2} W_{ij}\leq 0\,,\qquad \forall i \in \mc V_q\,,\ j\in\mc V_r\,,\ q\ne r\,,\ q,r\in\{1,2\}\,;\ee
\item[(ii)] $\mc G$ is \emph{structurally balanced} if it admits a balanced partition. 
\end{enumerate}
\end{definition}

Definition \ref{sb} says that a network is structurally balanced if its node set can be split into two opposite parts such that links connecting nodes in the same part all have nonnegative weight, while links connecting nodes in opposite parts all have nonpositive weight. We have the following simple result. 
\begin{lemma} \label{lemma:balanced}
A network $\mc G=(\mc V,\mc E, W)$ is structurally balanced if and only if there exists a gauge transformation $[\sigma]$ such that the $[\sigma]$-transformed network $\mc G^{[\sigma]}$ is unsigned. 
\end{lemma}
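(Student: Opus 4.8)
The plan is to observe that both conditions in the statement—structural balance and the existence of a gauge transformation rendering $\mc G$ unsigned—are merely two encodings of a single pointwise sign condition on the entries of $W$, namely that $\sigma_i\sigma_j W_{ij}\ge 0$ for every pair $i,j$ in $\mc V$. Indeed, recalling from \eqref{sigmaG} that the transformed weight matrix satisfies $(W^{[\sigma]})_{ij}=\sigma_i W_{ij}\sigma_j$, the network $\mc G^{[\sigma]}$ is unsigned if and only if $\sigma_i\sigma_j W_{ij}\ge 0$ for all $i,j$. The whole proof then reduces to matching this inequality against the defining inequalities \eqref{balanced-partition-1}--\eqref{balanced-partition-2} of a balanced partition, using the elementary fact that a sign vector $\sigma$ in $\mc X$ and a bipartition of $\mc V$ carry the same information.

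For the forward implication, I would start from a balanced partition $\mc V_1\cup\mc V_2=\mc V$ as in \eqref{balanced-partition} and define $\sigma$ in $\mc X$ by $\sigma_i=+1$ for $i\in\mc V_1$ and $\sigma_i=-1$ for $i\in\mc V_2$. Then $\sigma_i\sigma_j=+1$ exactly when $i$ and $j$ lie in the same block, and $\sigma_i\sigma_j=-1$ otherwise. Combining this with \eqref{balanced-partition-1}, which gives $W_{ij}\ge0$ within a block, and with \eqref{balanced-partition-2}, which gives $W_{ij}\le0$ across blocks, yields $\sigma_i\sigma_j W_{ij}\ge0$ in both cases, so $\mc G^{[\sigma]}$ is unsigned.

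For the converse, I would simply reverse this correspondence. Given $\sigma$ in $\mc X$ with $\mc G^{[\sigma]}$ unsigned, set $\mc V_1=\{i\in\mc V:\sigma_i=+1\}$ and $\mc V_2=\{i\in\mc V:\sigma_i=-1\}$, a bipartition of $\mc V$ of the form \eqref{balanced-partition}. The unsigned condition $\sigma_i\sigma_j W_{ij}\ge0$ then specializes to $W_{ij}\ge0$ when $i,j$ belong to the same block (where $\sigma_i\sigma_j=1$), recovering \eqref{balanced-partition-1}, and to $W_{ij}\le0$ when they belong to different blocks (where $\sigma_i\sigma_j=-1$), recovering \eqref{balanced-partition-2}. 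Hence $(\mc V_1,\mc V_2)$ is a balanced partition and $\mc G$ is structurally balanced.

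The argument is entirely elementary, and I do not anticipate any genuine obstacle: the only point worth keeping in mind is that the gauge transformation $[\sigma]$ and the bipartition are equivalent data (with the global flip $\sigma\mapsto-\sigma$ leaving both $W^{[\sigma]}$ and the unordered partition unchanged), so the proof amounts to reading off the sign of each product $\sigma_i\sigma_j$ in the two regimes.
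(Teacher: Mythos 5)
Your proof is correct and follows essentially the same route as the paper's: build $\sigma$ from the balanced partition (and vice versa) and check the sign of $\sigma_i\sigma_jW_{ij}$ in the within-block and cross-block cases. The only difference is an immaterial flip of the sign convention for $\sigma$ on $\mc V_1$ and $\mc V_2$.
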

\begin{IEEEproof} This result coincides with \cite[Lemma 1]{Altafini:2013}. However, since the work \cite{Altafini:2013} is developed under the ``digon sign-symmetry'' assumption (c.f.~footnote 1), and we are working in greater generality, we present a proof in Appendix \ref{sec:proof-lemma-balanced}.   
\end{IEEEproof}\medskip

Lemma \ref{lemma:balanced} and Proposition \ref{prop:super-modular} together imply the following result. 

\begin{proposition}\label{prop:structurally-balanced-equilibria}
Consider a {\black SNG} with binary actions on a network $\mc G=(\mc V,\mc E,W)$ with external field $h$. 
If $\mc G$ is structurally balanced, then the set of Nash equilibria $\mc N$ is  globally BR-reachable. \end{proposition}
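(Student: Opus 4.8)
The plan is to combine the three preceding results in a single chain: structural balance provides a gauge transformation making the network unsigned (Lemma \ref{lemma:balanced}), super-modularity gives global BR-reachability of the Nash set in the transformed game (Proposition \ref{prop:super-modular}), and the correspondence between best responses under gauge transformations (Lemma \ref{lemma:NashGauge}) lets me pull this property back to the original game. The only nontrivial content is the last step, so I would organize the argument around it.

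First I would invoke Lemma \ref{lemma:balanced} to fix a gauge transformation $[\sigma]$ for which the $[\sigma]$-transformed network $\mc G^{[\sigma]}$ is unsigned. The $[\sigma]$-transformed game is then a SNC game on an unsigned network, so Proposition \ref{prop:super-modular} guarantees that its set of Nash equilibria $\mc N_{[\sigma]}$ is globally BR-reachable. The remaining task is to transport global BR-reachability from the transformed game to the original one, using that $[\sigma]$ is an involution on $\mc X$ (since $\sigma_i\in\{\pm1\}$ gives $[\sigma]^2=I$) and hence a bijection of the strategy profile space.

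The key lemma I need is that the map $x\mapsto[\sigma]x$ sends BR-paths of the $[\sigma]$-transformed game to BR-paths of the original game, and conversely. To see this, suppose $(x^{(0)},\ldots,x^{(l)})$ is a BR-path in the transformed game, with player $i_k$ updating at step $k$, and set $y^{(k)}=[\sigma]x^{(k)}$. For $j\ne i_k$ one has $y^{(k)}_j=\sigma_j x^{(k)}_j=\sigma_j x^{(k-1)}_j=y^{(k-1)}_j$, so the non-updating coordinates agree. For the updating coordinate, identity \eqref{BR=BR} of Lemma \ref{lemma:NashGauge} gives $\mc B_{i_k}^{[\sigma]}(x^{(k-1)}_{-i_k})=\sigma_{i_k}\mc B_{i_k}(y^{(k-1)}_{-i_k})$, so multiplying the membership $x^{(k)}_{i_k}\in\mc B_{i_k}^{[\sigma]}(x^{(k-1)}_{-i_k})\setminus\{x^{(k-1)}_{i_k}\}$ by $\sigma_{i_k}$ yields $y^{(k)}_{i_k}\in\mc B_{i_k}(y^{(k-1)}_{-i_k})\setminus\{y^{(k-1)}_{i_k}\}$. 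Thus $(y^{(0)},\ldots,y^{(l)})$ is a BR-path in the original game, and applying $[\sigma]$ again gives the reverse implication. I expect this coordinate-wise verification to be the main (though mild) obstacle, since one must carefully track the sign flips and confirm that the ``strict change'' condition is preserved.

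Finally I would conclude by a direct reachability argument. Given any $z\in\mc X$, put $x=[\sigma]z$; since $\mc N_{[\sigma]}$ is globally BR-reachable, there is a BR-path in the transformed game from $x$ to some $w\in\mc N_{[\sigma]}$. Applying the gauge bijection, this becomes a BR-path in the original game from $[\sigma]x=z$ to $[\sigma]w$, and by the Nash correspondence \eqref{Nash=Nash} we have $[\sigma]w\in\{[\sigma]x^*:x^*\in\mc N_{[\sigma]}\}=\mc N$. As $z$ was arbitrary, $\mc N$ is globally BR-reachable, completing the proof.
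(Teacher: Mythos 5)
Your proposal is correct and follows exactly the paper's own argument: apply Lemma \ref{lemma:balanced} to obtain a gauge transformation making the network unsigned, invoke Proposition \ref{prop:super-modular}(ii) for the transformed game, and transfer the conclusion back via Lemma \ref{lemma:NashGauge}. The only difference is that you explicitly verify that the gauge bijection maps BR-paths to BR-paths using \eqref{BR=BR}, a step the paper leaves implicit; this is a welcome addition, not a deviation.
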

\begin{IEEEproof}
If $\mc G$ is structurally balanced, Lemma \ref{lemma:balanced} implies that there exists a gauge transformation $[\sigma]$ such that the whole node set $\mc V$ is coordinating in the $[\sigma]$-transformed network $\mc G^{[\sigma]}$. It then follows from Proposition \ref{prop:super-modular}(ii) that the $[\sigma]$-transformed {\black SNG} with binary action on the network $\mc G^{[\sigma]}$ with external field $h^{[\sigma]}$  has a non-empty and globally reachable set of Nash equilibria $\mc N_{[\sigma]}$. 
Lemma \ref{lemma:NashGauge} then implies that the set of Nash equilibria of the {\black SNG} with binary actions on $\mc G$ with external field $h$ satisfies  $\mc N=\{[\sigma]x^*:\,x^*\in\mc N_{[\sigma]}\}\ne\emptyset$ and is globally reachable. 
\end{IEEEproof}

\begin{example}\label{ex:sb}
	Consider the graph in Figure \ref{fig:sb}. The graph is not structurally balance\tcb{d}. Anyway, the subset $\mc R=\{1, \dots, 4\}$ is such that the graph $\mc G_{\mc R}$ is structurally balanced. Indeed, the partition $\mc R = \mc R_1 \cup \mc R_2$ with $\mc R_1=\{1,4\}$ and $\mc R_{\black 2}=\{2,3\}$ is such that $W_{\mc R_q \mc R_q}\geq 0$ for $q$ in $\{1,2\}$ and $W_{\mc R_q \mc R_r}\leq 0$ for $q\neq r$, $q$, $r$ in $\{1,2\}$. According to \tcb{Lemma  \ref{lemma:balanced}}, there exists a gauge transformation $[\sigma]$ such that $\mc R$ is a coordinating set of $\mc G^{[\sigma]}$. Indeed, \tcb{the gauge transformation $[\sigma]$ with $\sigma=(1,-1,-1,1,1,1)$ is such  that } $W^{[\sigma]}=[\sigma]W[\sigma]$ is such that
		$$
			\tilde W_{\mc R\mc R}= \tcb{[\sigma_{\mc R}]}\left[\begin{matrix}
				0 &-2&0&2\\
				-2 &0&1&0\\
				-1 &1&0&-2\\
				2&-1&0&0
			\end{matrix}\right]  \tcb{[\sigma_{\mc R}]}
			=\left[\begin{matrix}
				0 &2&0&2\\
				2 &0&1&0\\
				1 &1&0&2\\
				2&1&0&0
			\end{matrix}\right]\,,
		$$
		\tcb{that is a nonnegative matrix.}
\end{example}

{\color{black}\begin{figure}
		\centering
		\begin{tikzpicture}
			\foreach \x/\name/\value in {(0,0)/4/1, (2,2)/2/1, (2,0)/3/1, (0,2)/1/1}\node[shape=circle,draw, , fill=gray!30!white](\name) at \x {\small \name};
			\path [<->,draw] (2) edge node[right] {\small 1} (3);
			\path [<->,draw, red] (2) edge node[above] {\small -2} (1);
			\path [<->,draw] (1) edge node[left] {\small 2} (4);
			\path [->,draw,red] (3) edge node[below] {\small -2} (4);
			\path [->,draw,red] (3) edge node[above, near start] {\small -1} (1);
			\path [->,draw,red] (4) edge node[above, near start] {\small -1} (2);
			
			\node[shape=circle, draw ](5) at (-1.5,1) {\small 5};
			\node[shape=circle, draw ](6) at (3.5,1) {\small 6};
			
			\path[<->, draw] (4) edge node[above] {\small 1} (5);
			\path[<->, draw, red] (1) edge node[above] {\small -1} (5);
			\path[->, draw] (3) edge node[above] {\small 1} (6);
			\path[<->, draw,red] (2) edge node[above] {\small -1} (6);
			
			
		\end{tikzpicture}	
		\caption{Graph studied in Example \ref{ex:sb}. The subset $\mc R=\{1, \dots, 4\}$ (in gray) is such that $\mc G_{\mc R}$ is structurally balanced.}
		\label{fig:sb}
\end{figure}
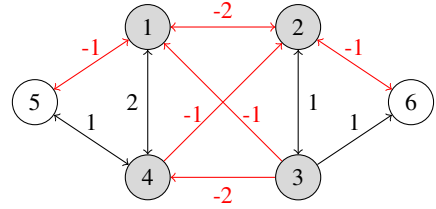}

\section{Main results }\label{sec:results}
In this section, we present our main results. 

Before proceeding, we introduce some notation that will play a crucial role in the following.
For every node $i$ in $\mc V$ and subset of nodes $\mc B\subseteq\mc V$, we define
\be\label{B-outdegree}w_i^{\mc B} = \sum_{j\in \mc B} {\color{black}|W_{ij}|}\,,\ee
to be its $\mc B$-\textit{out-degree}. 
In the special case when $\mc B=\mc V$ coincides with the whole node set, we simply refer to $$w_i=w_i^{\mc V}\,,$$ as the \textit{out-degree} of a node $i$ in $\mc V$ and let $w=|W|\1$ be the vector of out-degrees. 

We will often consider binary partitions of $\mc V$ of the type 
\be\label{binary-partition}\mc V={\mc R}\cup{\mc S}\,,\qquad \mc R\cap\mc S=\emptyset\,,\ee 
and identify the \tcb{action profile} space as  the Cartesian product 
$$\mc X=\mc X_{\mc R}\times \mc X_{\mc S}\,,$$ where $\mc X_{\mc R}=\mc A^{{\mc R}}$ and $\mc X_{\mc S}=\mc A^{{\mc S}}$. Correspondingly, we decompose every \tcb{action profile} $x$ in $\mc X$ as 
$$x=(x_{\mc R},x_{\mc S})\,,$$ where $x_{\mc R}$ in $\mc X_{\mc R}$ is the \tcb{action profile} of the players in $\mc R$ and $x_{\mc S}$ in $\mc X_{\mc S}$ is the \tcb{action profile} of the  players in $\mc S$. 

For every \tcb{action profile} $z$ in $\mc X_{\mc S}$, we consider the 
{\black SNG} with binary actions on the subnetwork $\mc G_{\mc R}$ with external field 
$$h^{(z)}=h_{\mc R}+W_{\mc R\mc S}z\in \R^{\mc R}\,,$$ 
i.e., the game with player set ${\mc R}$, action set $\mc A=\{\pm1\}$, and utility functions
\tcb{\be\label{eq:perturbeh_coord}
	\ba{rcl}u_i^{(z)}(y) 
	&=&\ds h^{(z)}_iy_i+y_i \sum_{j \in \mc S} W_{ij}y_j\\
	&=&\ds h_iy_i + y_i \sum_{j \in \mc S} W_{ij}z_j+ y_i \sum_{j \in \cp} W_{ij} y_j \,,\ea
\ee}
for every player $i$ in $\cp$ and \tcb{action profile} $y$ in $\mc X_{\mc R}$. We shall refer to this game as the \emph{$\mc R$-restricted} {\black SNG} with \tcb{action profile} of players in $\mc S$ frozen to $z$.

Analogously, for a given \tcb{action profile} $y$ in $\mc X_{\mc R}$, the $\mc S$-restricted game with \tcb{action profile} of players in $\mc R$ frozen to $y$ refers to the game with player set $\mc S$,  action set $\mc A=\{\pm1\}$,  and utility functions
\tcb{\be\label{eq:S-utilities}
	u_i^{(y)}(z) = z_ih_i+z_i \sum_{j \in \cp} W_{ij} y_j+z_i \sum_{j \in \mc S} W_{ij}z_j\,,
\ee}
for every player $i$ in $\mc S$ and \tcb{action profile} $z$ in $\mc X_{\mc S}$.
This can be interpreted as the {\black SNG} with binary actions on the subnetwork $\mc G_{\mc S}$ with external field 
$$h^{(y)}=h_{\mc S}+W_{\mc S\mc R}y\in \R^{\mc R}\,. $$ 

We denote by $\mc N_{\mc R}^{(z)}$ the set of Nash equilibria of the $\mc R$-restricted {\black SNG} with \tcb{action profile} of players in $\mc S$ frozen to $z$ and by $\mc N_{\mc S}^{(y)}$ the set of Nash equilibria of the $\mc S$-restricted {\black SNG} with \tcb{action profile} of players in $\mc R$ frozen to $y$.

\subsection{Existence of Nash equilibria}\label{sec:existence}
In this subsection, we investigate the existence of Nash equilibria of {\black SNG}s with binary actions.  

First, we consider a binary partition of the set of players as in \eqref{binary-partition}, such that the subnetwork $\mc G_{\mc R}$ is unsigned, and we look for Nash equilibria of the {\black SNG} whose projection on the coordinating set $\mc R$ 
is a consensus \tcb{action profile}, i.e., $$x^*=(a\1, z^*)\,,$$ for some $a=\pm 1$. 
Our first result, stated below, guarantees the existence of such Nash equilibria in terms of two properties of the $\mc R$- and $\mc S$-restricted {\black SNG}s, \tcb{respectively}: the first one, cohesiveness, limits the influence that players in $\mc S$ can have on the players in $\mc R$, the second one ensures that the $\mc S$-restricted {\black SNG} with \tcb{action profile} of players in $\mc R$ frozen to a consensus admits a Nash equilibrium.

\begin{proposition}\label{prop:mixed1}
Consider a {\black SNG} with binary actions on a network $\mc G=(\mc V,\mc E,W)$ with external field $h$,  and a binary partition \eqref{binary-partition} such that $\mc G_{\mc R}$ is unsigned. Assume that 
	there exists an action $a$ in $\{\pm1\}$ such that 
			\be\label{assumptioni}w^{{\mc R}}_i +ah_i\geq w^{{\mc S}}_i\,,\qquad \forall i\in\mc R\,,\ee
and  that \be\label{NSnon-empty}\mc N_{{\mc S}}^{(a\1)}\ne\emptyset\,.\ee
	Then, there exists a Nash equilibrium $x^*$ in $\mc N$ such that \be\label{x*R=a1}x^*_{\mc R}=a\1\,.\ee 
\end{proposition}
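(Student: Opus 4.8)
The plan is to construct the desired Nash equilibrium explicitly as $x^*=(a\1,z^*)$, where $z^*$ is chosen to be any Nash equilibrium of the $\mc S$-restricted game with the players in $\mc R$ frozen to the consensus $a\1$; such a $z^*$ exists precisely because of assumption \eqref{NSnon-empty}. With this candidate in hand, it remains to verify that $x^*_i\in\B_i(x^*_{-i})$ for every $i\in\mc V$, and I would do this by treating the players in $\mc S$ and those in $\mc R$ separately.

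For the players $i\in\mc S$, the verification is essentially immediate from the choice of $z^*$. Indeed, by \eqref{eq:S-utilities} the utility of player $i\in\mc S$ in the $\mc S$-restricted game with $\mc R$ frozen to $a\1$ coincides with $u_i(a\1,\cdot)$, the original utility evaluated with $x_{\mc R}=a\1$. Since $z^*\in\mc N_{\mc S}^{(a\1)}$, each $z^*_i$ is a best response in that restricted game, which translates directly into $x^*_i=z^*_i\in\B_i(x^*_{-i})$ in the full game, because $x^*_{-i}=(a\1,z^*_{-i})$ for every $i\in\mc S$.

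The substantive step is the verification for the players $i\in\mc R$, and this is where assumption \eqref{assumptioni} enters. From \eqref{eq:ut}, playing $a$ is a best response for player $i$ exactly when $a\bigl(h_i+\sum_{j\in\mc V}W_{ij}x^*_j\bigr)\ge0$. I would split the inner sum into its $\mc R$- and $\mc S$-parts. Since $\mc G_{\mc R}$ is unsigned and $x^*_{\mc R}=a\1$, the intra-$\mc R$ contribution equals $a\sum_{j\in\mc R}W_{ij}a=\sum_{j\in\mc R}W_{ij}=w_i^{\mc R}$, using $a^2=1$ and $W_{ij}\ge0$. The cross term with $\mc S$ is bounded below by $-\sum_{j\in\mc S}|W_{ij}|=-w_i^{\mc S}$, since $|z^*_j|=1$ for every $j\in\mc S$. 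Hence $a\bigl(h_i+\sum_{j\in\mc V}W_{ij}x^*_j\bigr)\ge ah_i+w_i^{\mc R}-w_i^{\mc S}\ge0$, the last inequality being exactly the cohesiveness condition \eqref{assumptioni}. Thus $x^*_i=a\in\B_i(x^*_{-i})$ for every $i\in\mc R$, completing the verification.

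I do not expect a genuine obstacle here: the argument is short once the candidate $x^*$ is identified. The one conceptual point worth emphasizing is that the lower bound $-w_i^{\mc S}$ on the cross term is a worst-case estimate that holds \emph{uniformly} over all strategy profiles $z\in\mc X_{\mc S}$ of the players in $\mc S$. This is precisely the robustness feature that makes the gluing work: cohesiveness \eqref{assumptioni} guarantees that the consensus action $a$ remains a best response for every player in $\mc R$ no matter how the players in $\mc S$ behave, so any equilibrium of the frozen $\mc S$-restricted game can be combined with $x^*_{\mc R}=a\1$ to yield a genuine Nash equilibrium of the full game.
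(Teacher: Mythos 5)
Your proposal is correct and follows essentially the same route as the paper's proof: construct $x^*=(a\1,z^*)$ with $z^*\in\mc N_{\mc S}^{(a\1)}$, note that the $\mc S$-players are automatically at best response, and verify the $\mc R$-players via the worst-case bound $-a(W_{\mc R\mc S}z^*)_i\le w_i^{\mc S}$ combined with the cohesiveness condition \eqref{assumptioni}. No gaps.
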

\begin{IEEEproof}         
Let $x^*$ in $\mc X$ be a \tcb{action profile} such that $x^*_{\mc R}=a\1$ and $x^*_{\mc S}=z^*$ in $\mc N_{{\mc S}}^{(a\1)}$ is a Nash equilibrium of the $\mc S$-restricted {\black SNG} with \tcb{action profile} of players in $\mc S$ frozen to $a\1$. 
We will now prove that $x^*$ is a Nash equilibrium of the {\black SNG}. 
	Since $z^*\in\mc N_{{\mc S}}^{(a\1)}$, what we are left to show is that $a\in \mc B_i(x_{-i}^*)$ for every player $i$ in $\mc R$. 
	 Since $W_{ij}\geq 0$ for every player $i$ and $j$ in $\mc R$, 
from  \eqref{eq:perturbeh_coord} this is equivalent to 
	$$
	a\left(aw_i^{{\mc R}}+h_i+(W_{{\mc R}{\mc S}}z^*)_i\right)\geq -a\left(aw_i^{{\mc R}}+h_i+(W_{{\mc R}{\mc S}}z^*)_i\right)\,,$$
	or, equivalently, that
	\be\label{wiR}w_i^{{\mc R}}+ah_i\geq -a(W_{{\mc R}{\mc S}}z^*)_i\,,\qquad \forall i\in\mc R\,.\ee
	Since $$w^{{\mc S}}_i\geq -a(W_{{\mc R}{\mc S}}z^*)_i\,,\qquad  \forall i\in\mc R\,,$$ \eqref{wiR} follows from \eqref{assumptioni}. This completes the proof. 
\end{IEEEproof}\medskip
\tcb{Proposition \ref{prop:mixed1} provides sufficient conditions for the existence of Nash equilibria when $\mc R$ is a set of coordinating players. The cohesiveness condition \eqref{assumptioni} guarantees that, when all players in $\mc R$ coordinate on action $a$, the combined effect of the external field and the interactions within $\mc R$ dominates, for every player in $\mc R$, the maximal adverse influence exerted by players in $\mc S$, namely $w_i^{\mc S}$. Throughout, we refer to this condition as a cohesiveness property, since it generalizes the original notion introduced in \cite{Morris:2000}, as discussed in Remark \ref{remark:cohesive}. Condition \eqref{NSnon-empty}, in turn, requires that the $\mc S$-restricted game, with the actions of players in $\mc R$ fixed at $a\1$, admits at least one Nash equilibrium. Together, these conditions ensure the existence of a Nash equilibrium of the original game satisfying $x^*_{\mc R}=a\1$.  }
\begin{remark} \label{remark:cohesive}
	In the special case when $h=0$, i.e., when all players are unbiased, using the identity $w_i^{{\mc S}}=w_i-w_i^{{\mc R}}$, we can rewrite condition \eqref{assumptioni} in Proposition \ref{prop:mixed1} as $$w_i^{{\mc R}}\geq 1/2w_i\,,\qquad \forall i\in{\mc R}\,.$$ 
	This says that every node in ${\mc R}$ has at least half of the weight of its out-links towards other nodes in ${\mc R}$. In the terminology introduced in \cite{Morris:2000}, this says that ${\mc R}$ is a $1/2$-cohesive subset of $\mc V$ relative to the unsigned graph $(\mc V,\mc E, |W|)$.
\end{remark}

\begin{remark}
Notice that the two networks of Examples \ref{ex:discoord} and \ref{ex:3ringanticoord}, for which Nash equilibria did not exist, were missing just one of the two assumptions in Proposition \ref{prop:mixed1}. Precisely, condition \eqref{assumptioni} was not satisfied by the network in Figure \ref{fig:disc_game} (a), while condition \eqref{NSnon-empty} was not satisfied by the network in Figure \ref{fig:disc_game} (b). We also observe that global reachability is not guaranteed by conditions \eqref{assumptioni} and  \eqref{NSnon-empty}  alone, as already noticed in Example \ref{ex:nostability}. 
\end{remark}
We now present the following result extending Proposition \ref{prop:mixed1} to cases when $\mc G_{\mc R}$ is structurally balanced rather than unsigned. In this case, provided that $\mc R$ remains cohesive, we can determine sufficient conditions for the existence of Nash equilibria $x^*$ of the {\black SNG} whose restriction $\tau=x^*_{\mc R}$ to the set $\mc R$ is such that the $[\tau]$-transformed subnetwork $\mc G_{\mc R}^{[\tau]}$ is unsigned.

\begin{theorem}\label{theo:main-existence}
	Consider a {\black SNG} with binary actions on a network $\mc G=(\mc V,\mc E,W)$ with external field $h$. 
	Assume that there exists a binary partition as in \eqref{binary-partition} such that $\mc G_{\mc R}$ is structurally balanced  
	and let $\tau$ in $\mc X_{\mc R}$ be such that 
	\be\label{tauWtau}\tau_iW_{ij}\tau_j\ge0\,,\qquad \forall i,j\in\mc R\,.\ee 
	If 
		\be\label{wRitau}
		w^{{\mc R}}_i + \tau_ih_i\geq w^{{\mc S}}_i\,,\qquad \forall i\in\mc R\,,
		\ee
		and
		\be\label{NStau}\mc N_{{\mc S}}^{(\tau)}\ne\emptyset\,,\ee
	then, there exists a Nash equilibrium $x^*$ in $\mc N$  such that 
	\be\label{xRtau}x_{\mc R}^*=\tau\,.\ee
\end{theorem}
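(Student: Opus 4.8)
The plan is to reduce the structurally balanced case to the unsigned case already settled in Proposition~\ref{prop:mixed1}, by means of a gauge transformation. First I would extend the balancing vector $\tau$, which is defined only on $\mc R$, to a full gauge vector $\sigma$ in $\mc X$ by setting $\sigma_i=\tau_i$ for $i\in\mc R$ and $\sigma_i=1$ for $i\in\mc S$, and then pass to the $[\sigma]$-transformed game, whose weight matrix is $W^{[\sigma]}=[\sigma]W[\sigma]$ and whose external field is $h^{[\sigma]}=[\sigma]h$. By Lemma~\ref{lemma:NashGauge}, the Nash equilibria of the transformed game correspond to those of the original one through \eqref{Nash=Nash}, so it suffices to produce a Nash equilibrium $y^*$ of the transformed game with $y^*_{\mc R}=\1$: then $x^*=[\sigma]y^*$ is a Nash equilibrium of the original game with $x^*_{\mc R}=[\tau]\1=\tau$, which is exactly \eqref{xRtau}.

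The choice of $\sigma$ is engineered so that Proposition~\ref{prop:mixed1} applies to the transformed game with $a=1$. Indeed, assumption \eqref{tauWtau} states precisely that $(W^{[\sigma]})_{ij}=\tau_iW_{ij}\tau_j\ge0$ for all $i,j\in\mc R$, so the subnetwork $\mc G_{\mc R}^{[\sigma]}$ is unsigned. Moreover, since $|\sigma_i|=1$ for every $i$, every out-degree is gauge-invariant: $\sum_{j\in\mc B}|(W^{[\sigma]})_{ij}|=\sum_{j\in\mc B}|W_{ij}|=w_i^{\mc B}$ for each $\mc B\subseteq\mc V$. Using this, the cohesiveness hypothesis \eqref{assumptioni} of Proposition~\ref{prop:mixed1} for the transformed game and $a=1$ reads $w_i^{\mc R}+(h^{[\sigma]})_i\ge w_i^{\mc S}$, that is $w_i^{\mc R}+\tau_ih_i\ge w_i^{\mc S}$ for all $i\in\mc R$, which is precisely \eqref{wRitau}.

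It then remains to verify the second hypothesis, namely that the $\mc S$-restricted transformed game with $\mc R$-players frozen to $\1$ has a nonempty set of Nash equilibria. Here I would observe that this game coincides with the $\mc S$-restricted original game with $\mc R$-players frozen to $\tau$. This is a one-line check from \eqref{eq:S-utilities}: for $i\in\mc S$ one has $\sigma_i=1$, hence $(W^{[\sigma]})_{ij}=W_{ij}$ for $j\in\mc S$, $(W^{[\sigma]})_{ij}=W_{ij}\tau_j$ for $j\in\mc R$, and $(h^{[\sigma]})_i=h_i$; substituting these into the two utility functions shows they are identical. Consequently the two sets of Nash equilibria agree, and \eqref{NStau} guarantees the required nonemptiness.

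With both hypotheses established, Proposition~\ref{prop:mixed1} with $a=1$ furnishes the desired equilibrium $y^*$ of the transformed game, and the gauge correspondence of Lemma~\ref{lemma:NashGauge} completes the argument. I expect no genuine obstacle here: the entire content is the bookkeeping of how out-degrees, the external field, and the $\mc S$-restricted game behave under the gauge. The one point requiring care is the extension of $\tau$ from $\mc R$ to $\mc S$, which must be carried out so as to leave the $\mc S$-restricted game untouched; the choice $\sigma_{\mc S}=\1$ accomplishes this, and the gauge-invariance of out-degrees then makes each hypothesis map onto its counterpart in Proposition~\ref{prop:mixed1} with no loss.
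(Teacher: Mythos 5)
Your proposal is correct and follows essentially the same route as the paper's own proof: extend $\tau$ to $\sigma$ with $\sigma_{\mc S}=\1$, pass to the $[\sigma]$-transformed game, check that \eqref{tauWtau} and the gauge-invariance of out-degrees turn \eqref{wRitau} into the cohesiveness hypothesis of Proposition~\ref{prop:mixed1} with $a=1$, identify the transformed $\mc S$-restricted game with the original one frozen to $\tau$, and pull the resulting equilibrium back via Lemma~\ref{lemma:NashGauge}. No gaps.
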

\begin{IEEEproof}
	Let $\sigma$ in $\mc X$ be such that $\sigma_{\mc R}=\tau$ and $\sigma_{\mc S}=\1$. 
Consider the  associated gauge transformation $[\sigma]$ and the {\black SNG} with binary actions on the $[\sigma]$-transformed network  $\mc G^{[\sigma]}$ with external field $h^{[\sigma]}$. 
Observe that condition \eqref{tauWtau} ensures that $\mc G^{[\sigma]}_{\mc R}$ is unsigned.  
Moreover, assumption \eqref{wRitau} implies that 
	\be\label{cond1}w_i^\mc R+h^{[\sigma]}_i \geq w_i^{\mc S},\qquad \forall i\in\mc R\,,\ee
	where $w_i^{\mc R}$ and $w_i^{\mc S}$ denote, respectively, the $\mc R$- and $\mc S$-out-degrees of node $i$ (as defined in \eqref{B-outdegree}) that are the same in both the original network $\mc G$ and the $[\sigma]$-transformed one $\mc G^{[\sigma]}$, since gauge transformations do not alter degrees.
Furthermore, it follows from equation \eqref{sigmau} that 	
	$$u_i^{[\sigma]}(\1,  z)=u_i([\sigma](\1, z))=u_i(\tau, z)\,,\qquad \forall i\in\mc S\,,$$
	which, together with assumption \eqref{NStau},  implies that 
	$$(\mc N_{[\sigma]})_{{\mc S}}^{(\tau)}=\mc N_{{\mc S}}^{(\tau)}\ne\emptyset\,.$$
	Therefore, the $[\sigma]$-transformed {\black SNG} 
satisfies all the assumptions of Proposition \ref{prop:mixed1}, hence it admits a Nash equilibrium $\tilde x^*$ in $\mc N_{[\sigma]}$ such that $\tilde x^*_{\mc R}=\1$. 
Lemma \ref{lemma:NashGauge} then implies that $x^*=[\sigma]\tilde x^*\in\mc N$ is a Nash equilibrium for the {\black SNG} with binary actions on $\mc G$ with external field $h$. Finally, observe  that $$x^*_{\mc R}=([\sigma]\tilde x^*)_{\mc R}=[\tau]\1=\tau\,,$$
so that \eqref{xRtau} holds true, thus completing the proof. 
\end{IEEEproof}\medskip
\tcb{Theorem \ref{theo:main-existence} generalizes the previous existence result from a coordinating set to the broader setting of a structurally balanced subgraph $\mc R$. In this case, the vector $\tau$ identifies the gauge transformation that renders the subgraph $\mc G_{\mc R}$ unsigned, as characterized by condition \eqref{tauWtau}. Accordingly, $\tau$ also represents the polarized configuration naturally associated with the structural balance of $\mc G_{\mc R}$. Condition \eqref{wRitau} then requires that, when players in $\mc R$ adopt the polarized profile $\tau$, the combined effect of the external field and the aligned interactions within $\mc R$ is sufficiently strong so that the influence exerted by players in $\mc S$ cannot prevent such a configuration from being a Nash equilibrium. Condition \eqref{NStau} is the natural counterpart of \eqref{NSnon-empty}, with the actions on $\mc R$ fixed to the polarized profile $\tau$ instead of the coordinated profile $a\1$. In this case, the resulting Nash equilibrium exhibits the polarized configuration $\tau$ on the set $\mc R$, namely, $x^*_{\mc R}=\tau$.} Observe that assumption \eqref{NStau} in  Theorem \ref{theo:main-existence}, namely  the fact that the set $\mc N_{{\mc S}}^{(\tau)}$ of Nash equilibria of the $\mc S$-restricted {\black SNG} with \tcb{action profile} of players in $\mc R$ frozen to $\tau$, is automatically verified when the subnetwork $\mc G_{\mc S}$ is either undirected or structurally balanced itself. This leads to the following corollaries.

\begin{corollary}\label{th:mixed_existence1}
	Consider a {\black SNG} with binary actions on $\mc G=(\mc V,\mc E,W)$ and a binary partition \eqref{binary-partition} where both  $\mc G_{\mc R}$ and $\mc G_{\mc S}$ are structurally balanced. If 
	there exists $\tau$ in $\mc X_{\mc R}$ such that assumption \eqref{wRitau} holds true, then there exists a Nash equilibrium $x^* \in \mc N$ satisfying equation \eqref{xRtau}. 
\end{corollary}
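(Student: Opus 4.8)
The plan is to derive Corollary \ref{th:mixed_existence1} as an immediate consequence of Theorem \ref{theo:main-existence} by verifying that its hypothesis \eqref{NStau} is automatically satisfied whenever the subnetwork $\mc G_{\mc S}$ is structurally balanced. Theorem \ref{theo:main-existence} already provides the existence of a Nash equilibrium $x^*$ with $x^*_{\mc R}=\tau$ under the three assumptions \eqref{tauWtau}, \eqref{wRitau}, and \eqref{NStau}. Since the corollary assumes \eqref{tauWtau} and \eqref{wRitau} directly, the only thing left to establish is condition \eqref{NStau}, namely that the $\mc S$-restricted SNC game with strategy profile of players in $\mc R$ frozen to $\tau$ admits a Nash equilibrium.

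First I would recall that the $\mc S$-restricted game with strategy profile of players in $\mc R$ frozen to $\tau$ is, by the discussion preceding \eqref{eq:S-utilities}, precisely the SNC game with binary actions on the subnetwork $\mc G_{\mc S}$ with external field $h^{(\tau)}=h_{\mc S}+W_{\mc S\mc R}\tau$. Its set of Nash equilibria is $\mc N_{\mc S}^{(\tau)}$. The key observation is that structural balance of $\mc G_{\mc S}$ is an intrinsic property of the subnetwork itself and does not depend on the external field. Hence I would apply Proposition \ref{prop:structurally-balanced-equilibria} to this $\mc S$-restricted game: since $\mc G_{\mc S}$ is structurally balanced, that proposition guarantees that its set of Nash equilibria is globally BR-reachable, and in particular non-empty. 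This yields $\mc N_{\mc S}^{(\tau)}\ne\emptyset$, which is exactly \eqref{NStau}.

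With \eqref{NStau} established, all three hypotheses of Theorem \ref{theo:main-existence} hold, so the theorem delivers a Nash equilibrium $x^*\in\mc N$ with $x^*_{\mc R}=\tau$, i.e., \eqref{xRtau}, completing the argument. The proof is therefore a short chain of implications and requires no new computation.

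I do not anticipate a genuine obstacle here; the whole content is the recognition that the $\mc S$-restricted game is itself an SNC game on the structurally balanced network $\mc G_{\mc S}$, so that the non-emptiness guarantee of Proposition \ref{prop:structurally-balanced-equilibria} applies verbatim. The only point requiring mild care is to confirm that the structural balance of the subnetwork $\mc G_{\mc S}$, which is defined purely through the sign pattern of $W_{\mc S\mc S}$, is unaffected by freezing the players in $\mc R$ to $\tau$ (the latter merely modifies the external field via the additive term $W_{\mc S\mc R}\tau$, leaving the interaction weights among players in $\mc S$ intact). Since Proposition \ref{prop:structurally-balanced-equilibria} holds for arbitrary external fields, this causes no difficulty.
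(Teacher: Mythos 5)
Your proposal is correct and follows exactly the paper's own argument: the paper likewise applies Proposition \ref{prop:structurally-balanced-equilibria} to the $\mc S$-restricted game on $\mc G_{\mc S}$ (whose structural balance is indeed independent of the external field $h_{\mc S}+W_{\mc S\mc R}\tau$) to conclude that $\mc N_{\mc S}^{(\tau)}\ne\emptyset$, and then invokes Theorem \ref{theo:main-existence}. No gaps.
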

\begin{IEEEproof}
	If the subnetwork $\mc G_{\mc S}$ is structurally balanced, then, for every $\tau$ in $\mc X_{\mc R}$, Proposition \ref{prop:structurally-balanced-equilibria} applied to  the $\mc S$-restricted {\black SNG} with \tcb{action profile} of players in $\mc R$ frozen to $\tau$, implies that assumption \eqref{NStau} holds true. 
	The claim then follows from Theorem \ref{theo:main-existence}. 
\end{IEEEproof}\medskip

\begin{corollary}\label{th:mixed_existence2}
	Consider a {\black SNG}  with binary actions on $\mc G=(\mc V,\mc E,W)$ and a binary partition \eqref{binary-partition} where $\mc G_{\mc R}$ is structurally balanced and $\mc G_{\mc S}$ is undirected. If 
	there exists $\tau$ in $\mc X_{\mc R}$ such that condition \eqref{wRitau} holds true, then there exists a Nash equilibrium $x^* \in \mc N$ satisfying equation \eqref{xRtau}. 
\end{corollary}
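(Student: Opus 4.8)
The plan is to deduce Corollary~\ref{th:mixed_existence2} from Theorem~\ref{theo:main-existence} in exactly the same way that Corollary~\ref{th:mixed_existence1} is deduced, the only difference being the justification of assumption~\eqref{NStau}. First I would observe that both hypotheses \eqref{tauWtau} and \eqref{wRitau} of Theorem~\ref{theo:main-existence} are assumed to hold for the given $\tau$ in $\mc X_{\mc R}$, and that $\mc G_{\mc R}$ is structurally balanced by assumption. Thus the entire conclusion of Theorem~\ref{theo:main-existence} will follow once I verify the remaining hypothesis~\eqref{NStau}, namely that the $\mc S$-restricted SNC game with the strategy profile of players in $\mc R$ frozen to $\tau$ possesses at least one Nash equilibrium, i.e.\ $\mc N_{{\mc S}}^{(\tau)}\ne\emptyset$.

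The key step is therefore to show that $\mc N_{{\mc S}}^{(\tau)}\ne\emptyset$ using the hypothesis that $\mc G_{\mc S}$ is \emph{undirected}. Here I would invoke Proposition~\ref{pr:pot_mixed} in place of Proposition~\ref{prop:structurally-balanced-equilibria}. Recall from the definition in \eqref{eq:S-utilities} that, for a fixed $y=\tau$, the $\mc S$-restricted game is precisely the SNC game with binary actions on the subnetwork $\mc G_{\mc S}$ with external field $h^{(\tau)}=h_{\mc S}+W_{\mc S\mc R}\tau$. Since $\mc G_{\mc S}$ is undirected, the weight matrix $W_{\mc S\mc S}$ is symmetric, so by Proposition~\ref{pr:pot_mixed}(i) this restricted game is an exact potential game. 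As recalled before the statement of Proposition~\ref{pr:pot_mixed}, every finite exact potential game has a nonempty set of Nash equilibria (indeed any global maximizer of the potential function is one), whence $\mc N_{{\mc S}}^{(\tau)}\ne\emptyset$. This establishes \eqref{NStau}.

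With all three hypotheses of Theorem~\ref{theo:main-existence} verified, I would conclude directly that there exists a Nash equilibrium $x^*\in\mc N$ satisfying $x^*_{\mc R}=\tau$, which is exactly \eqref{xRtau}, completing the proof. I do not anticipate a genuine obstacle here: the argument is a routine specialization of Theorem~\ref{theo:main-existence}, parallel to Corollary~\ref{th:mixed_existence1}, with the only substantive point being the substitution of the potential-game existence result (Proposition~\ref{pr:pot_mixed}) for the structural-balance existence result. The one detail worth double-checking is that the appropriate existence guarantee for undirected games really is the \emph{existence} of a Nash equilibrium (which follows from the potential structure) and does not require the stronger global BR-reachability; since \eqref{NStau} only demands nonemptiness, Proposition~\ref{pr:pot_mixed}(i) together with the standard potential-game existence theorem suffices.
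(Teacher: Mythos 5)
Your proposal is correct and follows essentially the same route as the paper: verify hypothesis \eqref{NStau} by noting that the $\mc S$-restricted game is an SNC game on the undirected network $\mc G_{\mc S}$, hence an exact potential game by Proposition \ref{pr:pot_mixed} and thus possesses a Nash equilibrium, and then apply Theorem \ref{theo:main-existence}. The paper's proof is precisely this two-line specialization, so no further comment is needed.
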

\begin{IEEEproof}
	If the subnetwork $\mc G_{\mc S}$ is undirected,  then, for every $\tau$ in $\mc X_{\mc R}$, Proposition \ref{pr:pot_mixed} applied to  the $\mc S$-restricted {\black SNG} with \tcb{action profile} of players in $\mc R$ frozen to $\tau$  implies that assumption \eqref{NStau} holds true. 
	The claim then follows from Theorem \ref{theo:main-existence}. 
%
%
\end{IEEEproof}\medskip
\begin{remark}	Corollaries \ref{th:mixed_existence1} and \ref{th:mixed_existence2}  significantly generalize previous results where the existence of (pure strategy) Nash equilibria was proved only for network coordination or network anti-coordination games over undirected graphs \cite{Ramazi.Riehl.Cao:2016}. Moreover, Corollary \ref{th:mixed_existence2}  applies to the mixed network coordination/anti-coordination games studied in \cite{vanelli2020games, arditti2021equilibria}.\end{remark}

\addtocounter{example}{-6}
\begin{example}[cont'd]
	Consider the graph in Figure \ref{fig:graph1} and let $h=0$. 
	Observe that, for all $i$ in $\mc R$, it holds that
	$w_i^{\mc R}\geq w_i^{\mc S}$. Since the subnetwork $\mc G_{\mc R}$ is unsigned and the \tcb{subnetwork} $\mc G_{\mc S}$ is undirected, 
	Corollary \ref{th:mixed_existence2} implies the existence of two Nash equilibria $x^*$ and $x^{**}$ in $\mc N$ such that $x^*_{\mc R}=\1=-x^{**}_{\mc R}$. 
\end{example}
\addtocounter{example}{5}


\subsection{Stability of Nash equilibria}

We now present results on the global stability of Nash equilibria of {\black SNG}s. Before proceeding, it is convenient to reconsider the {\black SNG} in Example \ref{ex:nostability}, whose set of Nash equilibria was shown to be not globally BR-reachable. A closer look at this example suggests that this is a direct consequence of the topological structure of the underlying network displayed in Figure \ref{fig:noeq_nostability}. 
This network contains two components $\{1,2,3\}$ and $\{5,6,7\}$, each of which without any out-link towards other nodes in the graph. This decomposition of the graph is what prevents the coordinating players to reach a consensus starting from a polarized initial condition. 
The observation above motivates the following definition that introduces a property of the graph guaranteeing that similar decompositions are not possible. The proposed definition also accounts for the external field $h$. It will be at the basis of the results presented in this subsection.
 \begin{definition}\label{def:indecomp}
 Consider a network $\mc G= (\mc V, \mc E, W) $ and two vectors  $h^-$ and $h^+$ in $\R^\mc V$ such that $h^{-}\leq h^+$. We say that $\mc G$ is  $(h^-,h^+)$-indecomposable if for every binary partition 
 \be\label{binarypartition+-}\mc V = \mc V^-\cup \mc V^+\,,\qquad \mc V^-\cap\mc V^+=\emptyset\,,\qquad\mc V^-\ne\emptyset\ne\mc V^+\,,\ee 
 there exists a node $i$ in $\mc V$ such that either
\begin{equation}\label{eq:ind1}
	i \in \mc V^+ \text{ and } w_i^{\mc V^+} + h_i^+<  w_i^{\mc V^-}
	\end{equation}
	or 
	\begin{equation}\label{eq:ind2} 
		i \in \mc V^- \text{ and } w_i^{\mc V^-}-h_i^- < w_i^{\mc V^+}\,.
	\end{equation}
\end{definition}

The following example illustrates the notion of indecomposability introduced in Definition \ref{def:indecomp} above. 

\begin{figure}
	\centering
	\begin{tikzpicture}
		\hspace{0.2cm}
		\foreach \x/\name/\value in {(0,0)/4/1, (2,2)/2/1, (2,0)/3/1, (0,2)/1/1}\node[shape=circle,draw](\name) at \x {\small \name};
		\path [<->,draw] (2) edge node[right] {\small 1} (3);
		\path [<->,draw] (2) edge node[above] {\small 2} (1);
		\path [<->,draw] (1) edge node[left] {\small 2} (4);
		\path [->,draw] (3) edge node[below] {\small 1} (4);
		\path [->,draw] (3) edge node[above, near start] {\small 1} (1);
		\path [->,draw] (4) edge node[above, near start] {\small 1} (2);

	\end{tikzpicture}	
	\caption{Graph considered in Example \ref{ex:dec1}.}
	\label{fig:indec}
\end{figure}
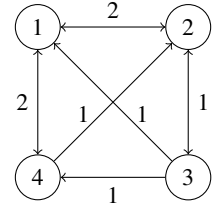

\begin{example}\label{ex:dec1}
	The network $\mc G$ in Figure \ref{fig:indec} is $(h^{-}, h^{+})$-indecomposable for $h^{+}=(3,2,0,2)=-h^{-} $. To see this, first notice that, for any \tcb{partition} where $|\mc V^+|=|\mc V^-|=2$, node $3$ satifies either \eqref{eq:ind1} or \eqref{eq:ind2} as a direct consequence of the facts that its out-degree is $w_3=3$ and $h_3^-=h_3^+=0$. 
Suppose instead that $|\mc V^-	|=1$. A direct check shows that the node in $\mc V^-$ satisfies \eqref{eq:ind2}. The case when $|\mc V^+|=1$ is identical. 

	The network $\mc G$ in Figure \ref{fig:indec} is not $(h^{-}, h^{+})$-indecomposable for $h^{+}=\1$ and  $h^{-} = -h^{+}$. Indeed, if we take the partition $\mc V^+=\{1,4\}$ and $\mc V^-=\{2,3\}$, we have that $w_1^{\mc V^+}+h_1^+=3\geq 2= w_1^{\mc V^-}$, $w_4^{\mc V^+}+h_4^+=3\geq 1 = w_4^{\mc V^-}$, $w_2^{\mc V^+}-h_2^-=2\geq 1= w_2^{\mc V^+}$ and $w_3^{\mc V^-}-h_3^-=2\geq 2=w_3^{\mc V^+}$. Then, $\nexists i$ in $\mc V^+$ or $\mc V^-$ such that either \eqref{eq:ind1} or \eqref{eq:ind2} is violated.
\end{example}

Consider a {\black SNG} with binary actions on a network $\mc G=(\mc V,\mc E,W)$ with external field $h$ such that 
\be\label{h-bounds} h^-_i\le h_i\le h_i^+\,,\qquad\forall i\in\mc V\,.\ee
Given a binary partition as in \eqref{binarypartition+-}, consider the \tcb{action profile} $x$ in $\mc X$ with  $$x_{\mc V^+}=\1\,,\qquad x_{\mc V^-}=-\1\,.$$ 
Then, conditions \eqref{eq:ind1} and \eqref{eq:ind2} imply that there exists a player $i$ in either  $\mc V^+$ or $\mc V^-$  that is not playing best response in \tcb{action profile} $x$. This implies that the {\black SNG} on $\mc G$ with external field $h$  admits no coexistent Nash equilibria, i.e., no Nash equilibria other than, possibly, consensus \tcb{action profile}s. Notice that the absence of coexistent Nash equilibria implied by the $(h^-,h^+)$-indecomposability of the graph is robust with respect to changes of the vector $h$ in the hyper-rectangle $\{h\in\R^{\mc V}:\,\eqref{h-bounds}\}$.

We will make use of a result in \cite{arditti2024robust} that ensures, for network coordination games on $(h^-,h^+)$-indecomposable unsigned networks, the existence of a BR-path from every \tcb{action profile} $x$ to a consensus \tcb{action profile} that is independent from the specific choice of the vector $h$ satisfying \eqref{h-bounds}. Precisely, \cite[Theorem 4(i)]{arditti2024robust} implies the following.

\begin{lemma}\label{lemma-robust}  Consider an unsigned network $\mc G= (\mc V, \mc E, W) $ and two vectors  $h^-$, $h^+$ in $\R^\mc V$ for which $\mc G$ is $(h^-,h^+)$-indecomposable. Then, for every \tcb{action profile} $x^{(0)}$ in $\mc X$, there exists an $l$-tuple of \tcb{action profile}s $(x^{(1)},\ldots,x^{(l)})$,  with $1\le l\leq n$, such that $x^{(l)}\in\{\pm\1\}$ is a consensus  profile, and $(x^{(0)},x^{(1)},\ldots,x^{(l)})$ is a BR-path for every {\black SNG} with binary actions on $\mc G$ with external field $h$ satisfying \eqref{h-bounds}. 

\end{lemma}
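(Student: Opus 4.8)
The plan is to derive the lemma from super-modularity of unsigned games (Proposition \ref{prop:super-modular}) together with a monotone best-response construction driven by indecomposability, using the two extremal fields $h^-$ and $h^+$ to dispatch all admissible $h$ at once. First I would reformulate robustness as best responses in two auxiliary games. For a player $i$ at a profile $x$, flipping from $-1$ to $+1$ is a best response for every $h$ satisfying \eqref{h-bounds} if and only if it is one under the least field $h^-$, i.e.\ iff $h^-_i+\sum_j W_{ij}x_j\ge0$; symmetrically, a down-flip is robust iff $h^+_i+\sum_j W_{ij}x_j\le0$. Call these \emph{robust} up- and down-flips. Since $W\ge0$, a monotone increasing sequence of robust up-flips is literally a BR-path of the network coordination game with field $h^-$, hence a BR-path for every $h\ge h^-$; a monotone decreasing sequence of robust down-flips is a BR-path for every $h\le h^+$. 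Thus any monotone robust path is automatically a BR-path for all $h$ in \eqref{h-bounds}, which settles robustness for free, and since it flips each player at most once its length is at most $n$.

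Next I would translate indecomposability. At a non-consensus profile $x$ with sign-partition $(\mc V^-,\mc V^+)$ one has $\sum_j W_{ij}x_j=w_i^{\mc V^+}-w_i^{\mc V^-}$, so \eqref{eq:ind1} reads $h^+_i+\sum_j W_{ij}x_j<0$ (a strict robust down-flip for $i\in\mc V^+$) and \eqref{eq:ind2} reads $h^-_i+\sum_j W_{ij}x_j>0$ (a strict robust up-flip for $i\in\mc V^-$). Hence $(h^-,h^+)$-indecomposability says exactly that at every non-consensus profile a robust strict flip exists. I would then run the greedy upward process from $x^{(0)}$, performing robust up-flips until none remain; it terminates (the profile strictly increases) at some $x^\uparrow\ge x^{(0)}$ admitting no robust up-flip, and symmetrically a downward process terminates at $x^\downarrow\le x^{(0)}$. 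For non-consensus $x^{(0)}$ the lemma reduces to the dichotomy $x^\uparrow=\1$ or $x^\downarrow=-\1$, the associated monotone path being the desired robust BR-path.

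The crux is this dichotomy, where indecomposability and $h^-\le h^+$ interact. Suppose both failed; then $x^\uparrow,x^\downarrow$ are both non-consensus and $x^\downarrow\le x^\uparrow$. Since $x^\uparrow$ admits no robust up-flip, \eqref{eq:ind2} fails throughout $\mc V^-(x^\uparrow)$, so indecomposability applied to the partition induced by $x^\uparrow$ yields $i^*\in\mc V^+(x^\uparrow)$ with $h^+_{i^*}+\sum_j W_{i^* j}x^\uparrow_j<0$. This $i^*$ must be a \emph{swing} node, $x^\downarrow_{i^*}=-1$: otherwise $i^*\in\mc V^+(x^\downarrow)$, and by $x^\downarrow\le x^\uparrow$ with $W\ge0$ it would also satisfy $h^+_{i^*}+\sum_j W_{i^* j}x^\downarrow_j<0$, contradicting that $x^\downarrow$ admits no robust down-flip. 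Symmetrically, $x^\downarrow$ yields a swing node $j^*$ with $h^-_{j^*}+\sum_j W_{j^* j}x^\downarrow_j>0$. When the swing set is a single node $a$, the profiles $x^\uparrow,x^\downarrow$ differ only at $a$; since $W_{aa}=0$ the value $d:=\sum_j W_{aj}x^\uparrow_j=\sum_j W_{aj}x^\downarrow_j$ is common, and the two inequalities collapse to $h^+_a+d<0<h^-_a+d$, i.e.\ $h^-_a>h^+_a$, contradicting $h^-\le h^+$.

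I expect the main obstacle to be upgrading this transparent single-swing contradiction to an arbitrary swing set $\mc A=\{i:x^\downarrow_i=-1,\ x^\uparrow_i=+1\}$, where the down-wishing $i^*$ and the up-wishing $j^*$ produced by indecomposability need not coincide, so the squeeze is not directly available. I would resolve this by induction on $n$: perform the robust strict flip guaranteed at $x^{(0)}$, freeze that player at its chosen action, and absorb its influence into the field by shifts of the form $h^\pm_i+W_{ij}x^{(0)}_j$ on the remaining players, then recurse. The delicate point, which I would treat with care, is that deleting a node can destroy $(h^-,h^+)$-indecomposability, so one must either show the shifted restriction stays indecomposable on the relevant partitions, or instead aggregate the cut inequalities obtained by applying \eqref{eq:ind1}--\eqref{eq:ind2} to the partitions that split $\mc A$ according to the sign of $x^{(0)}$, combined with the monotonicity recorded at the moments swing nodes are flipped, so as to contradict $h^-\le h^+$ once more. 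Once consensus is reached, the bound $1\le l\le n$ and robustness over all $h$ in \eqref{h-bounds} follow from the first paragraph, while the degenerate case $x^{(0)}\in\{\pm\1\}$ is immediate.
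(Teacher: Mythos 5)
The paper does not prove this lemma internally: it is imported verbatim from \cite[Theorem 4(i)]{arditti2024robust}, so your attempt is being measured against a self-contained argument that the paper never gives. Your preparatory reductions are all correct: a flip from $-1$ to $+1$ is a best response for every $h$ satisfying \eqref{h-bounds} iff it is one under $h^-$ (and dually for down-flips under $h^+$), monotone sequences of such robust flips are BR-paths for all admissible $h$ simultaneously with length at most $n$, and $(h^-,h^+)$-indecomposability is exactly the statement that every non-consensus profile admits a strictly robust flip. The single-swing contradiction is also correct. But the proof has a genuine gap exactly where you flag it: the dichotomy ``$x^\uparrow=\1$ or $x^\downarrow=-\1$'' is never established for a general swing set. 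The two escape routes you sketch do not close it. The induction on $n$ founders on the obstruction you yourself name (deleting a player and absorbing its action into the field does not preserve $(h^-,h^+)$-indecomposability, and you give no argument that it does on ``the relevant partitions''), and ``aggregating the cut inequalities'' is not carried out; moreover your own analysis shows why the direct squeeze fails: the witness $i^*$ at $x^\uparrow$ lies in the swing nodes flipped \emph{down} by the down-process while $j^*$ lies in those flipped \emph{up}, so they are necessarily distinct and the two strict inequalities live at incomparable profiles.

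The dichotomy is in fact true, but the clean argument runs the monotone process from a different starting point than $x^{(0)}$. Suppose both closures stall at non-consensus profiles and set $\mc C=\{i:x^\uparrow_i=-1\}$ and $\mc D=\{i:x^\downarrow_i=+1\}$; these are disjoint, non-empty, and ``resistant'' in the sense that $w_i^{\mc C}-h_i^->w_i^{\mc V\setminus\mc C}$ for all $i\in\mc C$ and $w_i^{\mc D}+h_i^+>w_i^{\mc V\setminus\mc D}$ for all $i\in\mc D$. Now run the robust up-process from the profile $\xi^{(0)}$ equal to $+1$ exactly on $\mc D$. Resistance keeps every node of $\mc C$ at $-1$ forever, so the process stalls at some non-consensus $\xi^\uparrow$. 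At the induced partition, every $-1$ node fails \eqref{eq:ind2} by stalling; every node of $\mc D$ fails \eqref{eq:ind1} by resistance and monotonicity; and every other $+1$ node was flipped up during this very process at some $y\le\xi^\uparrow$, whence $h_i^++\sum_jW_{ij}\xi^\uparrow_j\ge h_i^-+\sum_jW_{ij}y_j\ge0$ and \eqref{eq:ind1} fails for it too. No witness exists, contradicting indecomposability. The point your argument misses is that at $\xi^\uparrow$ \emph{every} non-$\mc D$ player at $+1$ carries a certificate of robustly preferring $+1$, which is false at your $x^\uparrow$ for the players that started at $+1$. As written, your proposal establishes the lemma only when the swing set is a singleton.
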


We can now get the following result.

\begin{proposition}\label{prop:stability}
Consider a {\black SNG} with binary actions on a network $\mc G=(\mc V,\mc E,W)$ with external field $h$ and a binary partition \eqref{binary-partition} such that $\mc G_{\mc R}$ is unsigned. Let $h^-$ and $h^+$ in $\R^{\mc R}$ be the vectors with entries 
\be\label{h+-}h^+_i=h_i+w_i^{{\mc S}}\,, \qquad h^-_i=h_i-w_i^{{\mc S}}\,,\qquad\forall i\in\mc R\,.\ee
Assume that $\mc G_{\mc R}$ is $(h^-,h^+)$-indecomposable and 
\be\label{wRi}w^{{\mc R}}_i -|h_i|> w^{{\mc S}}_i\,,\qquad\forall i\in\mc R\,.\ee
Then:  
\begin{enumerate}
\item[(i)] if, for $y=\pm\1$ in $\mc X_{\mc R}$, the set 
$\mc N_{{\mc S}}^{(y)}$  is globally BR-reachable for the $\mc S$-restricted {\black SNG} with action profile of players in $\mc R$ frozen to $y$, 
then, the subset  of Nash equilibria \be\label{tildeN}\tilde{\mc N}=\{x^*\in\mc N:\,x^*_{\mc R}\in\{\pm\1\}\}\,,\ee  is non-empty and globally BR-reachable. 
	\end{enumerate}
	Moreover: 
	\begin{enumerate}
	\item[(ii)] if, for $y=\pm\1$ in $\mc X_{\mc R}$, there exists a non-empty subset $\bar{\mc N}_{{\mc S}}^{(y)}\subseteq \mc N_{{\mc S}}^{(y)}$ that is globally BR-stable for the $\mc S$-restricted {\black SNG} with action profile of players in $\mc R$ frozen to $y$, then there exists a non-empty globally BR-stable subset $\ov{\mc N}\subseteq\tilde{\mc N}$. 
\end{enumerate}
\end{proposition}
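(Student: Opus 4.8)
The plan is to derive both parts from a single \emph{coordinate-then-equilibrate} scheme that freezes one block of players at a time and reduces the full game to the two restricted games. The basic observation is that, whenever the $\mc S$-block is frozen at some $z$ in $\mc X_{\mc S}$, the $\mc R$-restricted game is a network coordination game (because $\mc G_{\mc R}$ is unsigned) with external field $h^{(z)}=h_{\mc R}+W_{\mc R\mc S}z$; since $|z_j|=1$ we have $|(W_{\mc R\mc S}z)_i|\le w_i^{\mc S}$, hence $h^-_i\le h^{(z)}_i\le h^+_i$ with $h^{\pm}$ as in \eqref{h+-}. Thus every induced field $h^{(z)}$ lies in the hyper-rectangle defined by \eqref{h-bounds}, on which $\mc G_{\mc R}$ is $(h^-,h^+)$-indecomposable; this is precisely the hypothesis required to invoke Lemma \ref{lemma-robust}. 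I will also use repeatedly that the best-response correspondence of a player in one block, when the other block is held fixed, coincides with its best-response correspondence in the corresponding restricted game, so that a restricted BR-path is automatically a BR-path of the full game.

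For part~(i), to establish global BR-reachability I would start from an arbitrary $x$ in $\mc X$, freeze $x_{\mc S}$, and apply Lemma \ref{lemma-robust} to the $\mc R$-restricted game with field $h^{(x_{\mc S})}$: this produces a BR-path moving only players in $\mc R$ and driving $x_{\mc R}$ to some consensus $a\1$. Then I would freeze $x_{\mc R}=a\1$ and use the assumed global BR-reachability of $\mc N_{\mc S}^{(a\1)}$ to reach, through full-game BR-moves of players in $\mc S$, a profile $(a\1,z^*)$ with $z^*$ in $\mc N_{\mc S}^{(a\1)}$. This profile is a Nash equilibrium: players in $\mc S$ best respond by construction, while for each $i$ in $\mc R$ the margin of action $a$ at consensus is $w_i^{\mc R}+a\,h_i+a\,(W_{\mc R\mc S}z^*)_i\ge w_i^{\mc R}-|h_i|-w_i^{\mc S}>0$ by \eqref{wRi}, so that $a$ is the strict best response. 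Hence $(a\1,z^*)$ lies in $\tilde{\mc N}$ of \eqref{tildeN}, proving reachability; non-emptiness is then automatic (alternatively, it follows directly from Proposition \ref{prop:mixed1}, whose hypotheses \eqref{assumptioni}, \eqref{NSnon-empty} are implied by \eqref{wRi} and the reachability assumption).

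For part~(ii), I would set $\ov{\mc N}=\{(a\1,z^*):\,a\in\{\pm1\},\,z^*\in\bar{\mc N}_{\mc S}^{(a\1)}\}$. The same margin estimate (which depends only on $|z^*_j|=1$ and \eqref{wRi}, hence holds uniformly over $z^*$) shows $\ov{\mc N}\subseteq\tilde{\mc N}\subseteq\mc N$, and non-emptiness is inherited from the sets $\bar{\mc N}_{\mc S}^{(a\1)}$. Global BR-reachability of $\ov{\mc N}$ follows verbatim from the argument of part~(i), now using the reachability half of the global BR-stability of $\bar{\mc N}_{\mc S}^{(a\1)}$. The essential new point is BR-invariance: from any $(a\1,z^*)$ in $\ov{\mc N}$, no player in $\mc R$ can flip, since each is at a strict best response by the margin estimate, whereas any admissible flip of a player in $\mc S$ is a BR-move of the $\mc S$-restricted game with $\mc R$ frozen to $a\1$, so by the BR-invariance half of the stability of $\bar{\mc N}_{\mc S}^{(a\1)}$ it keeps the $\mc S$-block inside $\bar{\mc N}_{\mc S}^{(a\1)}$. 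Thus every single BR-transition out of $\ov{\mc N}$ lands back in $\ov{\mc N}$, and by induction so does every BR-path, giving BR-invariance and hence global BR-stability.

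I expect the main obstacle to be the bookkeeping that keeps the two restricted games from interfering. Concretely, one must ensure that the strict slack in \eqref{wRi} pins the $\mc R$-block at consensus \emph{uniformly} over every $\mc S$-configuration encountered during and after the dynamics, so that equilibrating $\mc S$ can never reactivate a player in $\mc R$; this single fact is simultaneously what turns the reached profile into a genuine Nash equilibrium in part~(i) and what delivers BR-invariance in part~(ii). The reachability half, by contrast, rests entirely on the fact that the induced fields $h^{(z)}$ never leave the indecomposability box, which is the sole purpose of the definition \eqref{h+-}.
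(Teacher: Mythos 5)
Your proposal is correct and follows essentially the same route as the paper's proof: drive the $\mc R$-block to consensus via Lemma \ref{lemma-robust} (justified by the induced fields $h^{(z)}$ lying in the $(h^-,h^+)$ box), then equilibrate $\mc S$ with $\mc R$ frozen, use the strict margin from \eqref{wRi} to certify the Nash property and, in part (ii), the BR-invariance of $\ov{\mc N}$. Your explicit verification that $h^-_i\le h^{(z)}_i\le h^+_i$ is a welcome piece of bookkeeping that the paper leaves implicit.
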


\begin{IEEEproof} 
Fix an arbitrary \tcb{action profile} $x$ in $\mc X$ and let $z=x_{\mc S}$. 
On the one hand, since the subnetwork $\mc G_{\mc R}$ is unsigned and $(h^-,h^+)$-indecomposable, Lemma \ref{lemma-robust} implies that the set of consensus \tcb{action profile}s $\{\pm\1\}\subseteq\mc X_{\mc R}$ is globally BR-reachable for the $\mc R$-restricted network coordination game with action profile of players in $\mc S$ frozen to $z$, so that there exists a length-$l$ BR-path $((y^{(0)},z),(y^{(1)},z),\ldots,(y^{(l)},z))$ with $y^{(0)}=x_{\mc R}$ and $y^{(l)}=a\1$, for some $a$ in $\{\pm1\}$. 
	On the other hand, since the set $\mc N_{{\mc S}}^{(a\1)}$  is globally BR-reachable for the $\mc S$-restricted {\black SNG} with action profile of players in $\mc R$ frozen to $y=a\1$, there exists a length-$m$ BR-path 
	$((a\1,z^{(0)}),(a\1,z^{(1)}),\ldots,(a\1,z^{(m)}))$ with $z^{(0)}=z$ and $z^{(m)}=z^*\in\mc N_{{\mc S}}^{(a\1)}$. 
	
	Observe that the \tcb{action profile} $x^*$ in $\mc X$ with $x^*_{\mc R}=a\1$ and  $x^*_{\mc S}=z^*$ is a Nash equilibrium for the original {\black SNG} with binary actions on $\mc G$ with external field $h$, as every player $i$ in $\mc R$ is playing best response thanks to \eqref{wRi}, while every player $j$ in $\mc S$ is playing best response since $z^*\in\mc N_{{\mc S}}^{(a\1)}$. We have thus found a length-$(l+m)$ BR-path 
	$((y,z),(y^{(1)},z),\ldots,(a\1,z),(a\1,z^{(1)}),\ldots,(a\1,z^*))$ from $x$ to $x^*$ in $\mc N$ with $x^*_{\mc R}=a\1$. 
	The arbitrariness of initial \tcb{action profile} $x$ in $\mc X$ implies that the set $\tilde{\mc N}$ defined in \eqref{tildeN}, i.e., the subset of Nash equilibria in which players in $\mc R$ are at consensus is globally BR-reachable, thus proving point (i) of the claim. 
	
	To prove point (ii) of the claim, observe that, if there exists a non-empty subset $\bar{\mc N}_{{\mc S}}^{(a\1)}\subseteq \mc N_{{\mc S}}^{(a\1)}$ that is globally BR-stable for the $\mc S$-restricted {\black SNG} with \tcb{action profile} of players in $\mc R$ frozen to $y=a\1$, then the BR-path above can be constructed leading to $x^*$ such that $x^*_{\mc S}\in\bar{\mc N}_{{\mc S}}^{(a\1)}$. Now, notice that assumption \eqref{wRi} implies that,  in the \tcb{action profile} $x^*$ defined above, every player $i$ in ${\mc R}$ is playing a strict best response $\mc B_i(x^*_{-i})=\{a\}$. This, combined with the BR-invariance of $\ov{\mc N}_{\mc S}^{(a\1)}$ for the $\mc S$-restricted {\black SNG} with\tcb{action profile} of players in $\mc R$ frozen to $y=a\1$, implies that 
	$$\ov{\mc N}=\{x^*\in\mc X:\,x^*_{\mc R}\in\{\pm\1\}\,,\, x^*_{\mc S}\in\bar{\mc N}_{{\mc S}}^{(x^*_{\mc R})}\}\,,$$
	is a non-empty, globally BR-stable subset of Nash equilibria, thus proving point (ii) of the claim. 
	\end{IEEEproof}

\begin{remark} In the special case of a network coordination game, i.e., when the network $\mc G$ is unsigned, Proposition \ref{prop:stability} provides sufficient conditions for consensus \tcb{action profile}s $\pm\1$ to be Nash equilibria and form a globally BR-reachable set. In this setting, the assumptions reduce to the following two conditions: (a) that $\mc G_{\mc R}$ is $(h^-,h^+)$-indecomposable; and (b) that $w_i>|h_i|$ for every $i$ in $\mc V$. Condition (a) ensures that no coexistent Nash equilibrium exists, while condition (b) ensures that both consensus \tcb{action profile}s $\pm\1$ are strict Nash equilibria. We notice that in this case conditions (a) and (b) are not just sufficient but also necessary for the set $\{\pm\1\}$ of consensus \tcb{action profile}s to be globally BR-stable. 
\end{remark}

We now present the following result extending Proposition \ref{prop:stability} to cases when $\mc G_{\mc R}$ is structurally balanced rather than unsigned. 

\begin{theorem}\label{theo:main-stability}
	Consider a {\black SNG} with binary actions on a network $\mc G=(\mc V,\mc E,W)$ with external field $h$. 
	 Consider a binary partition as in \eqref{binary-partition}, such that $\mc G_\mc R$ is structurally balanced 
	 and let $\tau$ in $\mc X_{\mc R}$ be such that $\mc G_{\mc R}^{[\tau]}$ is unsigned. 
	Let $h^-$ and $h^+$ in $\R^{\mc R}$ be the vectors with entries 
\be\label{h+-tau}h^+_i=\tau_ih_i+w_i^{{\mc S}}\,, \qquad h^-_i=\tau_ih_i-w_i^{{\mc S}}\,,\qquad\forall i\in\mc R\,.\ee
Assume that $\mc G_{\mc R}^{[\tau]}$ is $(h^-,h^+)$-indecomposable and that \eqref{wRi} holds true. 
Then:  
\begin{enumerate}
\item[(i)] if, for $y=\pm\tau$ in $\mc X_{\mc R}$, the set 
$\mc N_{{\mc S}}^{(y)}$  is globally BR-reachable for the $\mc S$-restricted {\black SNG} with the \tcb{action profile} in $\mc R$ frozen to $y$, 
then, the subset of Nash equilibria \be\label{Ntau}\ov {\mc N}=\{x^*\in\mc N:x^*_{\mc R}\in\{\pm\tau\}\}\,,\ee  is globally BR-reachable. 
	\end{enumerate}
	Moreover: 
	\begin{enumerate}
	\item[(ii)] if, for $y=\pm\tau$ in $\mc X_{\mc R}$, there exists a non-empty subset $\bar{\mc N}_{{\mc S}}^{(y)}\subseteq \mc N_{{\mc S}}^{(y)}$ that is globally BR-stable for the $\mc S$-restricted {\black SNG} where actions of players in $\mc R$ are frozen to $y$, then 
	there exists a globally BR-stable subset of Nash equilibria contained in $\ov{\mc N}$. 
\end{enumerate}
\end{theorem}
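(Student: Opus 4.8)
The plan is to reduce the statement to Proposition~\ref{prop:stability} (the unsigned case) through the gauge transformation $[\sigma]$ with $\sigma_{\mc R}=\tau$ and $\sigma_{\mc S}=\1$, exactly as Theorem~\ref{theo:main-existence} was reduced to Proposition~\ref{prop:mixed1}. First I would set up the $[\sigma]$-transformed SNC game on $\mc G^{[\sigma]}$ with external field $h^{[\sigma]}$ and verify that it satisfies every hypothesis of Proposition~\ref{prop:stability}. Since $\sigma_{\mc R}=\tau$, the restricted weight matrix $(W^{[\sigma]})_{\mc R\mc R}=[\tau]W_{\mc R\mc R}[\tau]$ coincides with the weight matrix of $\mc G^{[\tau]}_{\mc R}$, so $\mc G^{[\sigma]}_{\mc R}=\mc G^{[\tau]}_{\mc R}$ is unsigned. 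Because gauge transformations preserve the absolute weights $|W_{ij}|$, the out-degrees $w_i^{\mc R}$ and $w_i^{\mc S}$ are unchanged, and on $\mc R$ the transformed field is $(h^{[\sigma]})_i=\sigma_ih_i=\tau_ih_i$. Consequently the vectors $h^{\pm}$ of \eqref{h+-} built for the transformed game coincide verbatim with those of \eqref{h+-tau}; the $(h^-,h^+)$-indecomposability hypothesis on $\mc G^{[\tau]}_{\mc R}$ is precisely the one Proposition~\ref{prop:stability} demands for the $\mc R$-subnetwork of the transformed game; and condition \eqref{wRi}, which involves only $|h_i|=|\tau_ih_i|$ and the unchanged out-degrees, transfers unchanged.

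Next I would identify the $\mc S$-restricted games. Using \eqref{sigmau} together with $[\sigma](a\1,z)=(a\tau,z)$, one gets $u_i^{[\sigma]}(a\1,z)=u_i(a\tau,z)$ for every $i\in\mc S$, so the $\mc S$-restricted $[\sigma]$-transformed game with players in $\mc R$ frozen to $a\1$ is literally the original $\mc S$-restricted game with players in $\mc R$ frozen to $a\tau$. In particular $(\mc N_{[\sigma]})_{\mc S}^{(a\1)}=\mc N_{\mc S}^{(a\tau)}$, with identical best responses and hence identical BR-paths. Therefore, as $a$ ranges over $\{\pm1\}$ (so $a\tau$ ranges over $\{\pm\tau\}$), the reachability and stability hypotheses of Proposition~\ref{prop:stability}(i)--(ii) become exactly the hypotheses (i)--(ii) of the theorem for $y=\pm\tau$.

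The one ingredient that goes slightly beyond the results already proved is that the gauge transformation preserves the dynamical notions, not merely the Nash sets. I would record that $x\mapsto[\sigma]x$ is an involution of $\mc X$ that sends BR-paths of the $[\sigma]$-transformed game to BR-paths of the original game and back. This follows directly from \eqref{BR=BR}: if $\tilde x^{(k)}_{i_k}\in\mc B_{i_k}^{[\sigma]}(\tilde x^{(k-1)}_{-i_k})=\sigma_{i_k}\mc B_{i_k}(([\sigma]\tilde x^{(k-1)})_{-i_k})$, then $([\sigma]\tilde x^{(k)})_{i_k}\in\mc B_{i_k}(([\sigma]\tilde x^{(k-1)})_{-i_k})$ because $\sigma_{i_k}^2=1$, while coordinates outside $i_k$ stay frozen and distinctness is preserved by $\sigma_{i_k}=\pm1$. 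Since $[\sigma]$ is a bijection carrying BR-paths to BR-paths, it preserves global BR-reachability and BR-invariance, hence global BR-stability of any subset of strategy profiles.

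Finally I would apply Proposition~\ref{prop:stability} to the transformed game to obtain, in case (i), that $\tilde{\mc N}_{[\sigma]}=\{\tilde x^*\in\mc N_{[\sigma]}:\tilde x^*_{\mc R}\in\{\pm\1\}\}$ is non-empty and globally BR-reachable, and, in case (ii), a non-empty globally BR-stable subset $\ov{\mc N}_{[\sigma]}\subseteq\tilde{\mc N}_{[\sigma]}$. Pushing these through $[\sigma]$ and using $\mc N=[\sigma]\mc N_{[\sigma]}$ from Lemma~\ref{lemma:NashGauge}, the condition $\tilde x^*_{\mc R}=a\1$ corresponds to $x^*_{\mc R}=a\tau$, so $[\sigma]\tilde{\mc N}_{[\sigma]}$ is exactly $\ov{\mc N}=\{x^*\in\mc N:x^*_{\mc R}\in\{\pm\tau\}\}$ of \eqref{Ntau}, giving (i); and $[\sigma]\ov{\mc N}_{[\sigma]}$ is a non-empty globally BR-stable subset of $\ov{\mc N}$, giving (ii). I expect no genuine obstacle here: the analytic difficulty has already been absorbed into the robustness result Lemma~\ref{lemma-robust} and Proposition~\ref{prop:stability}, and the only point requiring care is the bookkeeping of the previous paragraph showing that gauge equivalence respects reachability, invariance, and stability, not just equilibria.
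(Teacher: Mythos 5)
Your proposal is correct and follows essentially the same route as the paper: reduce to Proposition~\ref{prop:stability} via the gauge transformation $[\sigma]$ with $\sigma_{\mc R}=\tau$, $\sigma_{\mc S}=\1$, and pull the conclusions back through $[\sigma]$ using Lemma~\ref{lemma:NashGauge}. Your explicit verification that $[\sigma]$ carries BR-paths to BR-paths (hence preserves reachability, invariance, and stability) is a detail the paper's proof leaves implicit, but it is the right justification and does not change the argument.
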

\begin{IEEEproof}
Let $\sigma$ in $\mc X$ be such that $\sigma_{\mc R}=\tau$ and $\sigma_{\mc S}=\1$. 
Consider the  associated gauge transformation $[\sigma]$ and the $[\sigma]$-transformed network  $\mc G^{[\sigma]}$. Observe that the assumptions ensure that $(\mc G^{[\sigma]})_{\mc R}=\mc G_{\mc R}^{[\tau]}$ is unsigned, that $\mc G_{\mc R}$ is $(h^-,h^+)$-indecomposable and that \eqref{wRi} holds true. We can thus apply Proposition \eqref{prop:stability} to the $[\sigma]$-transformed {\black SNG}, whose points (i) and (ii) imply, respectively, points (i) and (ii) of the claim. 
\end{IEEEproof}\medskip

Similarly to Section \ref{sec:existence}, we may derive the following two corollaries from Theorem \ref{theo:main-stability}.

\begin{corollary}\label{coro:reach} 
Consider a {\black SNG} on a network $\mc G=(\mc V,\mc E,W)$ with external field $h$. 
Consider a binary partition as in \eqref{binary-partition} such that $\mc G_{\mc R}$ and $\mc G_{\mc S}$ are both structurally balanced, and let $\tau$ in $\mc X_{\mc R}$ be such that $\mc G_{\mc R}^{[\tau]}$ is unsigned. Let $h^-$ and $h^+$ in $\R^{\mc R}$ be the vectors with entries as in \eqref{h+-}.
If $\mc G_{\mc R}$ is $(h^-,h^+)$-indecomposable and \eqref{wRi} holds true, then the subset of Nash equilibria \eqref{Ntau} is globally BR-reachable.
\end{corollary}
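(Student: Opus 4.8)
The plan is to obtain this corollary as an immediate specialization of Theorem \ref{theo:main-stability}(i). All of the hypotheses appearing in that theorem — that $\mc G_{\mc R}$ is structurally balanced, that $\tau$ renders $\mc G_{\mc R}^{[\tau]}$ unsigned, that $\mc G_{\mc R}^{[\tau]}$ is $(h^-,h^+)$-indecomposable, and that \eqref{wRi} holds — are already assumed verbatim in the corollary. The \emph{only} hypothesis of Theorem \ref{theo:main-stability}(i) not explicitly listed in the corollary is the global BR-reachability of the sets $\mc N_{\mc S}^{(y)}$ for $y=\pm\tau$. Hence the entire task reduces to verifying this single reachability condition, for which the extra assumption that $\mc G_{\mc S}$ is structurally balanced is exactly what is needed.

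First I would observe that, for any fixed strategy profile $y$ in $\mc X_{\mc R}$, the $\mc S$-restricted game is itself a SNC game with binary actions on the subnetwork $\mc G_{\mc S}$ with external field $h^{(y)}=h_{\mc S}+W_{\mc S\mc R}y$, as the utility functions \eqref{eq:S-utilities} make explicit. Since $\mc G_{\mc S}$ is assumed structurally balanced, Proposition \ref{prop:structurally-balanced-equilibria} applies directly to this restricted game and guarantees that its set of Nash equilibria $\mc N_{\mc S}^{(y)}$ is non-empty and globally BR-reachable. Instantiating this for the two relevant profiles $y=\tau$ and $y=-\tau$ supplies precisely the reachability hypothesis (i) of Theorem \ref{theo:main-stability}.

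It then remains only to invoke Theorem \ref{theo:main-stability}(i): with the indecomposability and degree assumptions on $\mc G_{\mc R}^{[\tau]}$ in force by hypothesis, and the reachability of $\mc N_{\mc S}^{(\pm\tau)}$ just established, we conclude that the set $\ov{\mc N}$ of \eqref{Ntau}, consisting of the Nash equilibria whose restriction to $\mc R$ equals $\pm\tau$, is globally BR-reachable. This completes the argument.

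The proof is essentially bookkeeping and parallels the derivation of Corollary \ref{th:mixed_existence1} from Theorem \ref{theo:main-existence}, where structural balance of $\mc G_{\mc S}$ was used to discharge the existence hypothesis \eqref{NStau}; here it discharges the stronger reachability hypothesis instead. I do not expect any genuine obstacle. The single point deserving care is notational consistency of the external-field vectors $h^-,h^+$: they must be interpreted through the $[\tau]$-transformed field as in Theorem \ref{theo:main-stability} (i.e.\ with entries $\tau_ih_i\pm w_i^{\mc S}$), so that the $(h^-,h^+)$-indecomposability assumption on $\mc G_{\mc R}^{[\tau]}$ transfers directly into the hypotheses of the theorem without modification.
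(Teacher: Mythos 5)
Your proposal is correct and follows exactly the paper's own proof: structural balance of $\mc G_{\mc S}$ together with Proposition \ref{prop:structurally-balanced-equilibria} discharges the reachability hypothesis on $\mc N_{\mc S}^{(\pm\tau)}$, and Theorem \ref{theo:main-stability}(i) then yields the claim. Your closing remark on reading $h^\pm$ and the indecomposability condition through the $[\tau]$-transformed quantities is a sensible clarification of a notational imprecision in the corollary's statement, not a deviation from the paper's argument.
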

\begin{IEEEproof} If the subnetwork $\mc G_{\mc S}$ is structurally balanced, then, for every $\tau$ in $\mc X_{\mc R}$, Proposition \ref{prop:structurally-balanced-equilibria} implies that the set  $\mc N_{{\mc S}}^{(y)}$  is globally BR-reachable for the $\mc S$-restricted {\black SNG} with the \tcb{action profile} in $\mc R$ frozen to $y=\pm\tau$. 
The claim then follows from Theorem \ref{theo:main-stability}(i). 
\end{IEEEproof}
	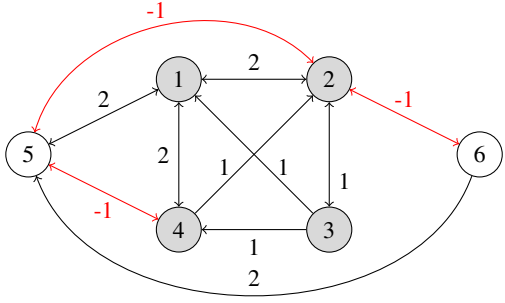
\begin{figure}
	\centering
	\begin{tikzpicture}
		\foreach \x/\name/\value in {(0,0)/4/1, (2,2)/2/1, (2,0)/3/1, (0,2)/1/1}\node[shape=circle,draw, fill=gray!30!white](\name) at \x {\small \name};
		\path [<->,draw] (2) edge node[right, near end] {\small 1} (3);
		\path [<->,draw] (2) edge node[above] {\small 2} (1);
		\path [<->,draw] (1) edge node[left] {\small 2} (4);
		\path [->,draw] (3) edge node[below] {\small 1} (4);
		\path [->,draw] (3) edge node[above, near start] {\small 1} (1);
		\path [->,draw] (4) edge node[above, near start] {\small 1} (2);

		\node[shape=circle, draw ](6) at (4,1) {\small 6};
		\node[shape=circle, draw ](5) at (-2,1) {\small 5};
		
		\path[<-, draw, bend right=70] (5) edge node[above] {\small 2} (6);
		\path[<->, draw, bend right=60,red] (2) edge node[above] {\small -1} (5);
		
		\path[<->, draw,red] (2) edge node[above] {\small -1} (6);
		\path[<->, draw] (1) edge node[above] {\small 2} (5);
		\path[<->, draw, red] (4) edge node[below] {\small -1} (5);
	\end{tikzpicture}	
	\caption{Signed graph with coordinating set $\mc R=\{1, \dots, 4\}$ in gray (see Example \ref{ex:non_dec_sup}).}
	\label{fig:non_dec_sup}
\end{figure}
\begin{example}\label{ex:non_dec_sup}
 Consider a {\black SNG} on the graph in Figure \ref{fig:non_dec_sup} with $h=0$. Observe that  $\mc R=\{1,\dots, 4\}$ and $\mc S=\mc V\setminus \mc R$ are  coordinating sets with $w^{\mc R}\vert_{\mc R}-\abs{d}_{\mc R}=(4,3,3,3)> w^{\mc S}\vert_{\mc R}=(2,2,0,1)$. 
Furthermore, the graph $\mc G_{\mc R}$ is $(h^+, h^-)$-indecomposable for $h^{+}=h_{\mc R} +w^{\mc S}\vert_{\mc R}=(2,2,0,1)$ and $h^-=h_{\mc R} -w^{\mc S}\vert_{\mc R}=-h^+$. This can be proved following the same reasoning as in Example \ref{ex:dec1} (notice that $\mc G_{\mc R}$ coincides with the graph in Figure \ref{fig:non_dec}). 
Then, according to Corollary \ref{coro:reach}, the set of Nash equilibria where the players in ${\mc R}$ are at consensus is globally BR-reachable.
Notice that the set of Nash equilibria 
$$\mc N=\{(a\1_{\mc R},-a,a),(a\1_{\mc R},a,-a) , a = \pm 1\}\,.$$ is not globally BR-stable. Indeed, for every $x^*$ in $\mc N$, the best response of player $5$ is $\mc B_5(x_{-5}^*)=\{\pm 1\}$, while $6$ is playing a strict best response. Therefore, there exists a best-response path (e.g., $((a\1_{\mc R},-a,a), (a\1_{\mc R},a,a))$ for $a=\pm 1$) that leaves the set of Nash equilibria. 
\end{example}
\begin{corollary}\label{coro:stability} 
Consider a {\black SNG} on a network $\mc G=(\mc V,\mc E,W)$ with external field $h$. 
Consider a binary partition as in \eqref{binary-partition} such that $\mc G_{\mc R}$ is structurally balanced and $\mc G_{\mc S}$ is undirected. Let $\tau$ in $\mc X_{\mc R}$ be such that $\mc G_{\mc R}^{[\tau]}$ is unsigned and let $h^-$ and $h^+$ in $\R^{\mc R}$ be the vectors with entries as in \eqref{h+-}.
If $\mc G_{\mc R}$ is $(h^-,h^+)$-indecomposable and \eqref{wRi} holds true, then there exists a globally BR-stable subset of Nash equilibria.
\end{corollary}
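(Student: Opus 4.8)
The plan is to deduce the statement directly from Theorem \ref{theo:main-stability}(ii). Under the hypotheses listed here, the conditions on $\mc G_{\mc R}$, the indecomposability of $\mc G_{\mc R}^{[\tau]}$, and \eqref{wRi} are already in force (using that a gauge transformation leaves the out-degrees $w_i^{\mc B}$ of \eqref{B-outdegree} invariant, so that indecomposability of $\mc G_{\mc R}$ and of $\mc G_{\mc R}^{[\tau]}$ coincide). Hence the only missing ingredient is the hypothesis of Theorem \ref{theo:main-stability}(ii): for $y=\pm\tau$, the existence of a non-empty subset $\bar{\mc N}_{\mc S}^{(y)}\subseteq\mc N_{\mc S}^{(y)}$ that is globally BR-stable for the $\mc S$-restricted game. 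This is exactly where undirectedness of $\mc G_{\mc S}$ enters, mirroring the role it played in Corollary \ref{th:mixed_existence2}, where it was used to supply assumption \eqref{NStau}.

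First I would recall, from \eqref{eq:S-utilities}, that the $\mc S$-restricted SNC game with the players in $\mc R$ frozen to $y$ is itself a SNC game on the subnetwork $\mc G_{\mc S}$ with external field $h^{(y)}=h_{\mc S}+W_{\mc S\mc R}y$. The key observation is that freezing the strategy of the players in $\mc R$ affects only this external field and leaves the interaction matrix $W_{\mc S\mc S}$ untouched. Since $\mc G_{\mc S}$ is undirected, $W_{\mc S\mc S}$ is symmetric, so the $\mc S$-restricted game is an undirected SNC game for every choice of $y$, in particular for $y=\pm\tau$.

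I would then invoke Proposition \ref{pr:pot_mixed}(iii): every undirected SNC game admits a non-empty globally BR-stable set contained in its set of Nash equilibria. Applied to the $\mc S$-restricted game with the players in $\mc R$ frozen to $y=\pm\tau$, this furnishes exactly the subsets $\bar{\mc N}_{\mc S}^{(y)}\subseteq\mc N_{\mc S}^{(y)}$ required by Theorem \ref{theo:main-stability}(ii). Applying that theorem yields a globally BR-stable subset of Nash equilibria contained in $\ov{\mc N}$ (as defined in \eqref{Ntau}), which completes the proof.

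I do not expect a genuine obstacle, since the corollary is a direct specialization of Theorem \ref{theo:main-stability}. The one point to handle with care is the preservation of undirectedness under freezing: one must confirm that only the external field, and not the symmetry of $W_{\mc S\mc S}$, is altered, so that Proposition \ref{pr:pot_mixed}(iii) applies uniformly for both $y=\tau$ and $y=-\tau$. A secondary bookkeeping check is the translation of the indecomposability and degree conditions stated for $\mc G_{\mc R}$ into those required by Theorem \ref{theo:main-stability}, which again rests on the gauge-invariance of the out-degrees.
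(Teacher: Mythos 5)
Your proposal is correct and follows essentially the same route as the paper: the paper likewise applies Proposition \ref{pr:pot_mixed} to the $\mc S$-restricted game (an undirected SNC game on $\mc G_{\mc S}$ with shifted external field) to obtain the non-empty globally BR-stable subsets $\bar{\mc N}_{\mc S}^{(\pm\tau)}$, and then concludes via Theorem \ref{theo:main-stability}(ii). Your additional checks (invariance of out-degrees under gauge transformations, and that freezing $\mc R$ only perturbs the external field) are exactly the right bookkeeping, which the paper leaves implicit.
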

\begin{IEEEproof}
	If the subnetwork $\mc G_{\mc S}$ is undirected,  then Proposition \ref{pr:pot_mixed} implies that, for $y=\pm\tau$ in $\mc X_{\mc R}$, there exists a non-empty subset $\bar{\mc N}_{{\mc S}}^{(y)}\subseteq \mc N_{{\mc S}}^{(y)}$ that is globally BR-stable for the $\mc S$-restricted {\black SNG} where actions of players in $\mc R$ are frozen to $y$.
The claim then follows from Theorem \ref{theo:main-stability}(ii). 
\end{IEEEproof}

{\color{black}
	\begin{figure}
		\centering
		\begin{tikzpicture}
			\foreach \x/\name/\value in {(0,0)/4/1, (2,2)/2/1, (2,0)/3/1, (0,2)/1/1}\node[shape=circle,draw, fill=gray!30!white](\name) at \x {\small \name};
			\path [<->,draw] (2) edge node[right, near end] {\small 1} (3);
			\path [<->,draw] (2) edge node[above] {\small 2} (1);
			\path [<->,draw] (1) edge node[left] {\small 2} (4);
			\path [->,draw] (3) edge node[below] {\small 1} (4);
			\path [->,draw] (3) edge node[above, near start] {\small 1} (1);
			\path [->,draw] (4) edge node[above, near start] {\small 1} (2);

			\node[shape=circle, draw ](6) at (4,1) {\small 6};
			\node[shape=circle, draw ](5) at (-2,1) {\small 5};
			
			\path[<->, draw, bend right=70, red] (5) edge node[above] {\small -1} (6);
			\path[->, draw, bend right=60] (2) edge node[above] {\small 1} (5);
			
			\path[<->, draw,red] (2) edge node[above] {\small -1} (6);
			\path[<->, draw, red] (1) edge node[above] {\small -1} (5);
			\path[<->, draw, red] (4) edge node[below] {\small -1} (5);
		\end{tikzpicture}	
		\caption{Graph studied in Example \ref{ex:non_dec}.}
		\label{fig:non_dec}
	\end{figure}
\begin{example}\label{ex:non_dec}
Consider the {\black SNG} on the graph in Figure \ref{fig:non_dec} with $h=(2,0,0,-1,0,0)$. Observe that $\mc R=\{1,\dots, 4\}$ is a coordinating set with $w^{\mc R}\vert_{\mc R}=(4,3,3,3)$ and  $w^{\mc S}\vert_{\mc R}=(1,2,0,1)$ and $\mc G_{\mc S}$ with $\mc S=\mc V\setminus \mc R$ is undirected.
It holds that 
$$
w^{\mc R}\vert_{\mc R}-\abs{h}_{\mc R}=(2,3,3,2)>w^{\mc S}\vert_{\mc R}= (1,2,0,1)\,.
$$
Furthermore, the graph $\mc G_\mc R$ is $(h^{-},h^{+})$-indecomposable for $\tilde h^{+}=h\vert_{\mc R}+w^{\mc S}\vert_{\mc R}=(3,2,0,0)$ and $\tilde h^-=h\vert_{\mc R}-w^{\mc S}\vert_{\mc R}=(1,-2,0,-2)$. Again, this can proved following the same reasoning as in Example \ref{ex:dec1} and \ref{ex:non_dec_sup}. Then,  according to Corollary \ref{coro:stability}, the set of Nash equilibria where the players in ${\mc R}$ are at consensus contains a globally BR-stable subset.
\end{example}
\begin{example}\label{ex:dec}
Consider now the {\black SNG} on the graph in Figure \ref{fig:dec} and let $h=0$. Analogously to Example \ref{ex:non_dec}, we have that $\mc R=\{1,\dots, 4\}$ is a coordinating set with $w^{\mc R}\vert_{\mc R}=(4,3,3,3)$, 
 and $\mc G_{\mc S}$ with $\mc S=\mc V\setminus \mc R$ is undirected. On the other hand, in this case, we have $w^{\mc S}\vert_{\mc R}=(1,1,1,1)$. 
Since $w^{\mc R}\vert_{\mc R}\geq w^{\mc S}\vert_{\mc R}$, Corollary \ref{th:mixed_existence2} holds and existence of a Nash equilibrium where the players in $\mc R$ are at consensus is guaranteed. However, 
$\mc G_{\mc R}$ is \textit{not} $(h^+,h^-)$-indecomposable for $h^{+}=h+w^{\mc S}\vert_{\mc R}=(1,1,1,1)$ and $h^-=h-w^{\mc S}\vert_{\mc R}=-h^+$. Indecomposability is indeed violated by $\mc R^{+}=\{1,4\}$ and $\mc R^-=\{2,3\}$. Therefore, Proposition \ref{prop:stability} and Theorem \ref{theo:main-stability} do not apply. Observe that $x^*=(1,-1,-1,1, -1,1, 1,-1)$ is a strict Nash equilibrium of the game where players in $\mc R$ are not at consensus. 
\end{example}
}

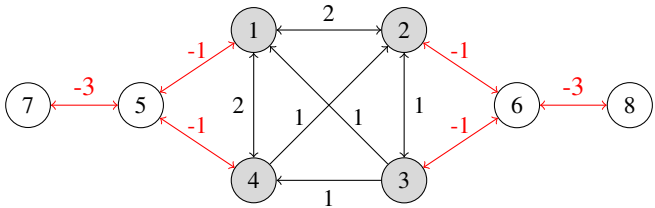
\begin{figure}
		\centering
		\begin{tikzpicture}
	\foreach \x/\name/\value in {(0,0)/4/1, (2,2)/2/1, (2,0)/3/1, (0,2)/1/1}\node[shape=circle,draw, , fill=gray!30!white](\name) at \x {\small \name};
	\path [<->,draw] (2) edge node[right] {\small 1} (3);
	\path [<->,draw] (2) edge node[above] {\small 2} (1);
	\path [<->,draw] (1) edge node[left] {\small 2} (4);
	\path [->,draw] (3) edge node[below] {\small 1} (4);
	\path [->,draw] (3) edge node[above, near start] {\small 1} (1);
	\path [->,draw] (4) edge node[above, near start] {\small 1} (2);
	
	\node[shape=circle, draw ](5) at (-1.5,1) {\small 5};
	\node[shape=circle, draw ](6) at (3.5,1) {\small 6};
	
	\path[<->, draw,red] (4) edge node[above] {\small -1} (5);
	\path[<->, draw, red] (1) edge node[above] {\small -1} (5);
	\path[<->, draw,red] (3) edge node[above] {\small -1} (6);
	\path[<->, draw,red] (2) edge node[above] {\small -1} (6);
	
	\node[shape=circle, draw ](7) at (-3,1){\small 7};
	\node[shape=circle, draw ](8) at (5,1){\small 8};
	
	\path [<->,draw=red, text=red] (7) edge node[above] {-3} (5);
	\path [<->,draw=red, text=red] (8) edge node[above] {-3} (6);


\end{tikzpicture}	
\caption{Graph studied in Example \ref{ex:dec}}
\label{fig:dec}
\end{figure}

\addtocounter{example}{-5}
\begin{example}[cont'd]
	Consider the graph in Figure \ref{fig:sb}. As observed in Example \ref{ex:sb} subset $\mc R=\{1, \dots, 4\}$ is such that the graph $\mc G_{\mc R}$ is structurally balanced. Indeed, if we consider the gauge transformation $[\sigma]$ with $\sigma=[1,-1,-1,1,1,1]$, we obtain that $W^{[\sigma]}=[\sigma]W[\sigma]\geq 0$. Furthermore, for $h_i=0$, we have $w^{\mc R}\vert_{\mc R}=(4,3, 4, 3)> w^{\mc S}\vert_{\mc R}=(1,1,1,1) $. Since $\mc G_{\mc S}$ is a undirected (no links), this implies that, for $\tau=(1,-1,-1,1)$, there exists a polarized Nash equilibrium, that is, a Nash equilibrium where $x_i^*=\tau_i$ for all $i$ in $\mc R$. For instance, the \tcb{action profile} $x^*=(1,-1,-1,1,-1,1)$ is a Nash equilibrium of the game.

Finally, since $\tilde{\mc G}_{\mc R}$ with $\tilde W_{\mc R\mc R}=\abs{W_{\mc R\mc R}}$ is $(h^+,h^-)$-indecomposable for $h^+=\1$ and $h^-=-\1$, the set of Nash equilibria that are polarized in the structurally balanced component admits a globally BR-stable subset. We remark that these results can be obtained \textit{without } knowing the partitions $\mc R_1$ and $\mc R_2$. 
\end{example}

\section{Concluding remarks}\label{conclusion}
The {\black signed network games with binary actions} studied in this paper encompass a number of network strategic games that have appeared in the recent literature. They model the contemporaneous presence of strategic complement and strategic substitute effects in an economic multi-player model, or rather the presence of antagonistic behaviors in a social network.
Such games pose significant challenges: Nash equilibria may fail to exist, and even when they do, learning dynamics may be complex and sensitive to initial conditions and activation order. In this paper, we have obtained conditions under which a subset of coordinating players is capable of forcing the convergence of best response dynamics to a Nash equilibrium that is a consensus on their part. Our results use in a novel way the concept of cohesiveness proposed in \cite{Morris:2000} and build on the super-modular properties of coordinating games.  Further work includes finding efficient algorithms to verify the proposed conditions and deriving necessary conditions for existence, reachability and stability of Nash equilibria.

\section*{Acknowledgments}
The authors gratefully acknowledge Prof.~Claudio Altafini and Prof.~Angela Fontan for insightful discussions.
\bibliographystyle{ieeetr}
\bibliography{bib}

\begin{thebibliography}{10}

\bibitem{vanelli2020games}
M.~Vanelli, L.~Arditti, G.~Como, and F.~Fagnani, ``On games with coordinating
  and anti-coordinating agents,'' {\em IFAC-PapersOnLine}, vol.~53, no.~2,
  pp.~10975--10980, 2020.

\bibitem{arditti2021equilibria}
L.~Arditti, G.~Como, F.~Fagnani, and M.~Vanelli, ``Equilibria and learning
  dynamics in mixed network coordination/anti-coordination games,'' in {\em
  2021 60th IEEE Conference on Decision and Control (CDC)}, pp.~4982--4987,
  2021.

\bibitem{Blume:1993}
L.~Blume, ``The statistical mechanics of strategic interaction,'' {\em Games
  and Economic Behavior}, vol.~5, pp.~387--424, 1993.

\bibitem{Jackson:2008}
M.~O. Jackson, {\em Social and Economic Networks}.
\newblock Princeton University Press, 2008.

\bibitem{Galeotti.ea:2010}
A.~Galeotti, S.~Goyal, M.~Jackson, F.~Vega-Redondo, and L.~Yariv, ``Network
  games,'' {\em The Review of Economic Studies}, vol.~77, no.~1, pp.~218--244,
  2010.

\bibitem{Jackson.Zenou:2015}
M.~O. Jackson and Y.~Zenou, {\em Handbook of game theory with economic
  applications}, vol.~4, ch.~Games on networks, pp.~95--163.
\newblock Elsevier, 2015.

\bibitem{Ellison:1993}
G.~Ellison, ``Learning, local interaction, and coordination,'' {\em
  Econometrica}, vol.~61, no.~5, pp.~1047--1071, 1993.

\bibitem{Young:1993}
H.~P. Young, ``The evolution of conventions,'' {\em Econometrica: Journal of
  the Econometric Society}, vol.~61, no.~1, pp.~57--84, 1993.

\bibitem{Morris:2000}
S.~Morris, ``Contagion,'' {\em The Review of Economic Studies}, vol.~67, no.~1,
  pp.~57--78, 2000.

\bibitem{Young:2006}
H.~Young, {\em The Diffusion of Innovations in Social Networks}, vol.~Economy
  as an evolving complex system, pp.~267--282.
\newblock Oxford University Press US, 2006.

\bibitem{Jackson.Storms:2025}
M.~Jackson and E.~Storms, ``Behavioral communities and the atomic structure of
  networks,'' {\em American Economic Journal: Microeconomics}, vol.~18, no.~1,
  pp.~146--173, 2026.

\bibitem{Bramoulle.ea:2004}
Y.~Bramoull\'e, D.~L\'opez-Pintado, S.~Goyal, and F.~Vega-Redondo, ``Network
  formation and anti-coordination games,'' {\em International Journal of Game
  Theory}, vol.~33, pp.~1--19, 2004.

\bibitem{Galam:2004}
S.~Galam, ``Contrarian deterministic effects on opinion dynamics: the hung
  elections scenario,'' {\em Phisica A}, vol.~333, no.~53--460, 200.

\bibitem{Bramoulle:2007}
Y.~Bramoull{\'e}, ``Anti-coordination and social interactions,'' {\em Games and
  Economic Behavior}, vol.~58, no.~1, pp.~30--49, 2007.

\bibitem{Lopez-Pintado:2009}
D.~L\'opez-Pintado, ``Network formation, cost-sharing and anti-coordination,''
  {\em International Game Theory Review}, vol.~11, no.~1, pp.~53--76, 2009.

\bibitem{Grabisch.Li:2019}
M.~Grabisch and F.~Li, ``Anti-conformism in the threshold model of collective
  behavior,'' {\em Dynamic Games and Applications}, vol.~9, pp.~1--34, 2019.

\bibitem{Topkins:1979}
D.~M. Topkins, ``Equilibrium points in nonzero-sum n-person submodular games,''
  {\em SIAM Journal on Control and Optimization}, vol.~17, no.~6, pp.~773--787,
  1979.

\bibitem{Milgrom.ea:1990}
P.~Milgrom and J.~Roberts., ``Rationalizability, learning, and equilibrium in
  games with strategic complementarities,'' {\em Econometrica}, vol.~58,
  pp.~1255--1277, 1990.

\bibitem{Vives:1990}
X.~Vives, ``Nash equilibrium with strategic complementarities,'' {\em Journal
  of Mathematical Economics}, vol.~19, pp.~305--321, 1990.

\bibitem{Topkins:1998}
D.~M. Topkins, {\em Supermodularity and Complementarity}.
\newblock Princeton University Press, 1998.

\bibitem{Ramazi.Cao:2020}
P.~Ramazi and M.~Cao, ``Convergence of linear threshold decisionmaking dynamics
  in finite heterogeneous populations,'' {\em Automatica}, vol.~119, p.~109063,
  2020.

\bibitem{sakhaei2023equilibration}
N.~Sakhaei, Z.~Maleki, and P.~Ramazi, ``Equilibration analysis and control of
  coordinating decision-making populations,'' {\em IEEE Transactions on
  Automatic Control}, vol.~69, no.~8, pp.~5065--5080, 2023.

\bibitem{Ramazi.Riehl.Cao:2016}
P.~Ramazi, J.~Riehl, and M.~Cao, ``Networks of conforming and nonconforming
  individuals tend to reach satisfactory decisions,'' {\em Proceedings of the
  National Academy of Sciences of the United States of America}, vol.~113,
  no.~46, pp.~12985--12990, 2016.

\bibitem{Grabisch.ea:2019}
M.~Grabisch, A.~Poindron, and A.~Ruzinowska, ``A model of anonymous influence
  with anti-conformist agents,'' {\em Journal of Economic Dynamics \& Control},
  vol.~109, p.~103773, 2019.

\bibitem{ramazi2023characterizing}
P.~Ramazi and M.~H. Roohi, ``Characterizing oscillations in heterogeneous
  populations of coordinators and anticoordinators,'' {\em Automatica},
  vol.~154, p.~111068, 2023.

\bibitem{le2023heterogeneous}
H.~Le, A.~Aghaeeyan, and P.~Ramazi, ``Heterogeneous mixed populations of
  conformists, nonconformists, and imitators,'' {\em IEEE Transactions on
  Automatic Control}, vol.~69, no.~5, pp.~3373--3380, 2023.

\bibitem{Monaco:2016}
A.~Monaco and T.~Sabarwal, ``Games with strategic complements and
  substitutes,'' {\em Economic Theory}, vol.~62, no.~1, pp.~65--91, 2016.

\bibitem{Harary:1953}
F.~Harary, ``On the notion of balance of a signed graph,'' {\em Michigan Math.
  J.}, vol.~2, pp.~143--146, 1953.

\bibitem{Cartwright:1956}
D.~Cartwright and F.~Harary, ``Structural balance: A generalization of heider's
  theory,'' {\em The Psychological Review}, vol.~63, no.~5, pp.~277--293, 1956.

\bibitem{Macy:2003}
M.~W. Macy, J.~A. Kitts, A.~Flache, and S.~Benard, ``{Polarization in Dynamic
  Networks: A Hopfield Model of Emergent Structure},'' in {\em Dynamic Social
  Network Modeling and Analysis}, pp.~162--173, The National Academies Press,
  2003.

\bibitem{Leskovec:2010}
J.~Leskovec, D.~Huttenlocker, and J.~Kleinberg, ``Signed networks in social
  media,'' in {\em CHI '10: Proceedings of the SIGCHI Conference on Human
  Factors in Computing Systems}, pp.~1361--1370, 2010.

\bibitem{Kauffman:1969}
S.~A. Kauffman, ``Metabolic stability and epigenesis in randomly constructed
  genetic nets,'' {\em Journal of Theoretical Biology}, vol.~22, no.~3,
  pp.~437--467, 1969.

\bibitem{Thomas:1973}
R.~Thomas, ``Boolean formalization of genetic control circuits,'' {\em Journal
  of Theoretical Biology}, vol.~42, pp.~563--585, 1973.

\bibitem{Hopfield:2010}
J.~J. Hopfield, ``Neural networks and physical systems with emergent collective
  computational abilities,'' {\em Proceedings of the National Academy of
  Sciences}, vol.~79, no.~8, pp.~2554--2558, 1982.

\bibitem{Mezard}
M.~Mezard, G.~Parisi, and M.~A. Virasoro, ``Spin glass theory and beyond,''
  {\em World Scientific Publishing Company}, 1987.

\bibitem{Granovetter:1978}
M.~Granovetter, ``Threshold models of collective behavior,'' {\em American
  Journal of Sociology}, vol.~83, no.~6, pp.~1420--1443, 1978.

\bibitem{He:2013}
X.~He, H.~Du, M.~W. Feldman, and G.~Li, ``Information diffusion in signed
  networks,'' {\em PLoS ONE}, vol.~14, no.~10, pp.~1--21, 2019.

\bibitem{Golesa:2025}
E.~Golesa, P.~Montealegrea, R.~Rios-Wilson, and S.~Sen{\'e}, ``Dynamical
  stability of threshold networks over undirected signed graphs,'' {\em
  Theoretical Computer Science}, vol.~1042, p.~115229, 2025.

\bibitem{Altafini:2012}
C.~Altafini, ``Dynamics of opinion forming in structurally balanced social
  networks,'' {\em PLoS ONE}, vol.~7, no.~6, pp.~935--946, 2012.

\bibitem{Altafini:2013}
C.~Altafini, ``Consensus problems on networks with antagonistic interactions,''
  {\em IEEE Transactions on Automatic Control}, vol.~58, no.~4, pp.~935--946,
  2013.

\bibitem{fontan2017multiequilibria}
A.~Fontan and C.~Altafini, ``Multiequilibria analysis for a class of collective
  decision-making networked systems,'' {\em IEEE Transactions on Control of
  Network Systems}, vol.~5, no.~4, pp.~1931--1940, 2017.

\bibitem{fontan2021role}
A.~Fontan and C.~Altafini, ``The role of frustration in collective
  decision-making dynamical processes on multiagent signed networks,'' {\em
  IEEE Transactions on Automatic Control}, vol.~67, no.~10, pp.~5191--5206,
  2021.

\bibitem{arditti2024robust}
L.~Arditti, G.~Como, F.~Fagnani, and M.~Vanelli, ``Robust coordination of
  linear threshold dynamics on directed weighted networks,'' {\em IEEE
  Transactions on Automatic Control}, vol.~69, no.~10, pp.~6515--6529, 2024.

\bibitem{Yanovskaya:1968}
E.~B. Yanovskaya, ``Equilibrium points in polymatrix games,'' {\em Litovskii
  Matematicheskii Sbornik}, vol.~8, pp.~381--384, 1968.

\bibitem{Arditti.Como.Fagnani:2024}
L.~Arditti, G.~Como, and F.~Fagnani, ``On the separability of functions and
  games,'' {\em IEEE Transactions on Control of Network Systems}, vol.~11,
  no.~2, pp.~831--841, 2024.

\bibitem{Blume:1995}
L.~Blume, ``The statistical mechanics of best response strategy revision,''
  {\em Games and Economic Behavior}, vol.~11, no.~2, pp.~111--145, 1995.

\bibitem{Monderer.Shapley:1996}
D.~Monderer and L.~Shapley, ``Potential games,'' {\em Games and Economic
  Behavior}, vol.~14, pp.~124--143, 1996.

\bibitem{Catalano.ea:2024}
C.~Catalano, M.~Castaldo, G.~Como, and F.~Fagnani, ``On a network centrality
  maximization game,'' {\em Mathematics of Operations Research}, vol.~50,
  no.~3, pp.~2112--2140, 2025.

\end{thebibliography}

\appendices
\section{Proof of Lemma \ref{lemma:NashGauge}}\label{sec:proof-lemma-NashGauge}
First, observe that 
$x^*$ in $\mc X$ is a Nash equilibrium for the $[\sigma]$-transformed game if and only if 
$$x_i^*\left(\sum\nolimits_jW^{[\sigma]}_{ij}x^*_j+h^{[\sigma]}_i\right)\ge0\,,\qquad \forall i\in\mc V\,.$$
Now, notice that 
$$
\ba{rcl}
\!\!\!\ds x_i^*\left(\sum\nolimits_jW^{[\sigma]}_{ij}x^*_j+h^{[\sigma]}_i\right)
&\!\!\!\!=\!\!\!\!&
x_i^*\left(\ds\sum\nolimits_j\sigma_iW_{ij}\sigma_jx^*_j+\sigma_ih_i\right)\\[10pt]
&\!\!\!\!=\!\!\!\!&\ds
\sigma_ix_i^*\left(\sum\nolimits_jW_{ij}\sigma_jx^*_j+h_i\right)\\[10pt]
&\!\!\!\!=\!\!\!\!&\ds
([\sigma]x^*)_i\left(\sum\nolimits_jW_{ij}([\sigma]x^*)_j+h_i\right)
.\ea$$
Therefore, $x^*$ is a Nash equilibrium for the $[\sigma]$-transformed game if and only if  $[\sigma]x^*$ is a Nash equilibrium for the 
{\black SNG} with binary actions on $\mc G$ with external field $h$. $\qed$

\section{Proof of Lemma \ref{lemma:balanced}}\label{sec:proof-lemma-balanced}
If $\mc G$ is structurally balanced, then  consider a balanced partition as in \eqref{balanced-partition} and let $\sigma$ in $\mc X$ have entries $\sigma_i=-1$ for every $i$ in $\mc V_1$ and $\sigma_i=+1$ for every $i$ in $\mc V_2$. It then  follows from \eqref{balanced-partition-1} that 
$W^{[\sigma]}_{ij}=\sigma_iW_{ij}\sigma_j=W_{ij}\geq 0$, 
for every $i$ and $j$ in $\mc V_q$, for $q=1,2$, whereas \eqref{balanced-partition-2} implies that 
$W^{[\sigma]}_{ij}=\sigma_iW_{ij}\sigma_j=-W_{ij}\geq 0$, for every $i$ in $\mc V_q$ and $j$ in $\mc V_r$, for $q\ne r$, $q,r=1,2$. This shows that, if $\mc G$ is structurally balanced, then $W^{[\sigma]}$ is a nonnegative matrix, hence the whole set $\mc V$ is coordinating for the transformed network $\mc G^{[\sigma]}$. 

Conversely, let $[\sigma]$ be a gauge transformation such that the whole node set $\mc V$ is coordinating for the transformed network $\mc G^{[\sigma]}$, i.e., $W^{[\sigma]}_{ij}\ge0$ for every $i$ and $j$ in $\mc V$. Define \be\label{V12}\mc V_1=\{i\in\mc V\!:\,\sigma_i=-1\}\,,\qquad\mc V_2=\{i\in\mc V\!:\,\sigma_i=1\}\,.\ee Then, clearly \eqref{balanced-partition} holds true. Moreover, for every $i$ and $j$ in $\mc V_q$, for $q=1,2$, we have  $W_{ij}=\sigma_iW^{[\sigma]}_{ij}\sigma_j=W_{ij}^{[\sigma]}\geq 0$, so that  \eqref{balanced-partition-1} holds true. Furthermore,  for every $i$ in $\mc V_q$ and $j$ in $\mc V_r$, for $q\ne r$, we have $W_{ij}=\sigma_iW^{[\sigma]}_{ij}\sigma_j=-W^{[\sigma]}_{ij}\geq 0$, so that  \eqref{balanced-partition-2} holds true as well. Therefore, \eqref{V12} determines a balanced partition of the node set $\mc V$, so that $\mc G$ is structurally balanced. \qed

\begin{IEEEbiography}
[{\includegraphics[width=1in,height=1.25in,clip,keepaspectratio]{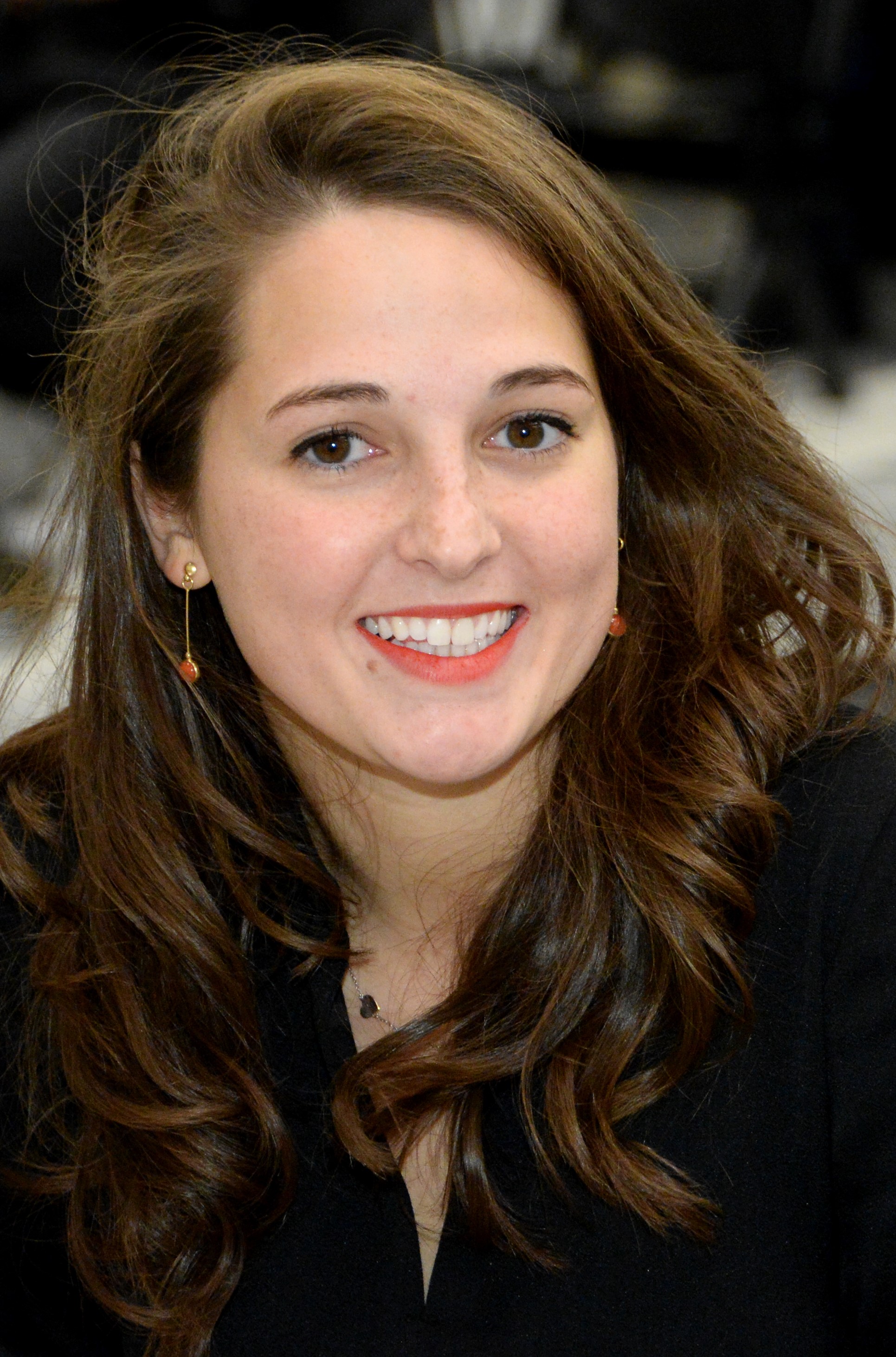}}]
{Martina Vanelli} received the B.Sc., M.S. (\textit{cum laude}), and Ph.D. (\textit{with honors}) degrees in Applied Mathematics  from  Politecnico  di  Torino,  in  2017,  2019, and 2024, respectively. She is currently a postdoctoral fellow at the Institute for Information and Communication Technologies, Electronics and Applied Mathematics (ICTEAM), Université catholique de Louvain. From October 2018 to March 2019, she was a visiting student at Technion, Israel Institute of Technology. Her research interests include identification, analysis, and control of multi-agent systems, with application to social and economic networks, and power markets. 
\end{IEEEbiography}

\begin{IEEEbiography}
[{\includegraphics[width=1in,height=1.25in,clip,keepaspectratio]{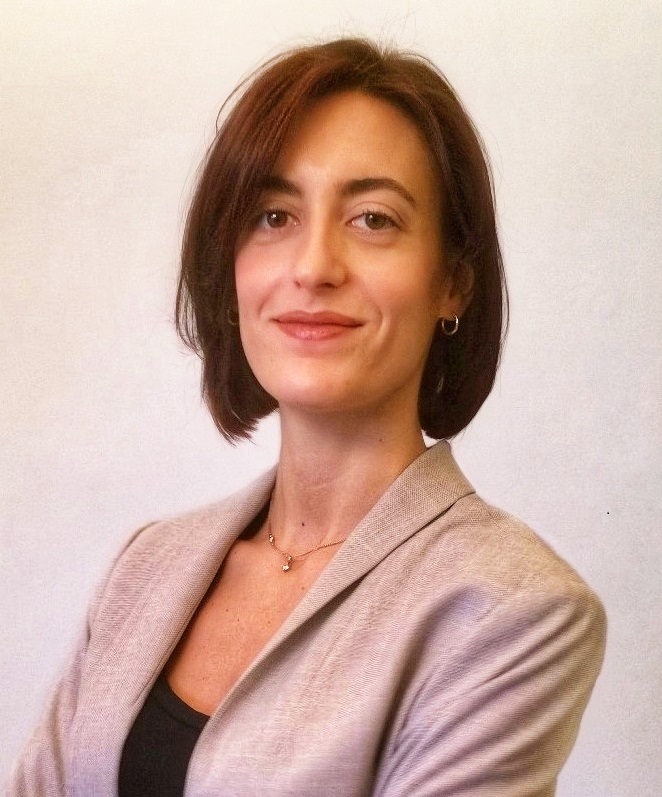}}]
{Laura Arditti} graduated with honors as a PhD in Applied Mathematics at the Department of Mathematical Sciences of Politecnico di Torino in 2023. She previously received the B.Sc. in Physics Engineering in 2016 and the M.S. in Mathematical Engineering in 2018, both magna cum laude from Politecnico di Torino. She currently works in the
financial sector as quantitative Financial Crime Prevention analyst. Her research focused on game theory, its relationship with graphical models, and its
applications to economic and financial networks.
\end{IEEEbiography}

\begin{IEEEbiography}
[{\includegraphics[width=1in,height=1.25in,clip,keepaspectratio]{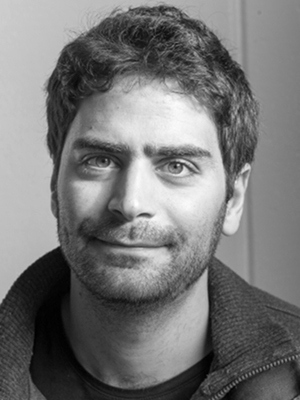}}]
{Giacomo Como}(M'12) {Giacomo Como} is  a  Professor at  the Department  of  Mathematical  Sciences,  Politecnico di  Torino,  Italy. He is also a Senior Lecturer  at  the  Automatic  Control  Department, Lund  University,  Sweden.  He  received the B.Sc., M.S., and Ph.D.~degrees in Applied Mathematics  from  Politecnico  di  Torino, Italy, in  2002,  2004, and 2008, respectively. He was a Visiting Assistant in  Research  at  Yale  University  in  2006--2007  and  a Postdoctoral  Associate  at  the  Laboratory  for  Information  and  Decision  Systems,  Massachusetts  Institute of Technology in  2008--2011. Prof.~Como currently serves as Senior Editor for the \textit{IEEE Transactions on Control of Network Systems}, and as Associate  Editor  for \textit{Automatica} and the \textit{IEEE Transactions on Automatic Control}.  He served as Associate Editor for the  \textit{IEEE Transactions on Network Science and Engineering} (2015-2021) and for the \textit{IEEE Transactions on Control of Network Systems} (2016-2022).  He was  the  IPC  chair  of  the  IFAC  Workshop  NecSys'15 ,  a  semiplenary speaker  at  the  International  Symposium  MTNS'16, and the  chair  of the  {IEEE-CSS  Technical  Committee  on  Networks  and  Communications} (2019-2024). He  is a  recipient  of  the 2015  George S.~Axelby  Outstanding Paper Award.  His  research interests  are in  dynamics,  information,  and  control  in  network  systems  with  applications to  cyber-physical  systems,  infrastructure  networks,  and  social  and  economic networks.
\end{IEEEbiography}

\begin{IEEEbiography}
[{\includegraphics[width=1in,height=1.25in,clip,keepaspectratio]{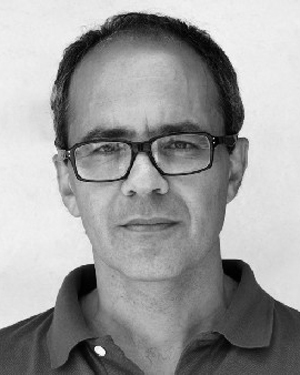}}]
{Fabio Fagnani} received the Laurea degree in Mathematics from the University of Pisa and the Scuola Normale Superiore, Pisa, Italy, in 1986. He received the PhD degree in Mathematics from the University of Groningen,  Groningen,  The  Netherlands,  in 1991. From 1991 to 1998, he was an Assistant Professor of Mathematical Analysis at the Scuola Normale Superiore. In 1997, he was a Visiting Professor at the Massachusetts Institute of Technology (MIT), Cambridge, MA. Since 1998, he has been with the Politecnico of Torino, where since 2002 he has been a Full Professor of Mathematical Analysis. From 2006 to 2012, he has acted as Coordinator of the PhD program in Mathematics for Engineering Sciences at Politecnico di Torino. From June 2012 to September 2019, he served as the Head of the Department of Mathematical Sciences, Politecnico di Torino. His current research topics are on cooperative algorithms and dynamical systems over networks, inferential distributed algorithms, and opinion dynamics. He is an Associate Editor of the \textit{IEEE Transactions on Automatic Control} and served in the same role for the \textit{IEEE Transactions on network Science and Engineering} and of the \textit{IEEE Transactions on Control of network Systems}.
\end{IEEEbiography}

\end{document}